\title{Merging two Hierarchies of Internal Contextual Grammars with Subregular Selection}
\author{Bianca Truthe 
\institute{Institut f\"ur Informatik, Universit\"at Giessen, Arndtstr.~2, 35392 Giessen, Germany}
\email{bianca.truthe@informatik.uni-giessen.de}}
\def\Set#1#2{\left\{\: #1\;|\; #2\:\right\}}
\def\Sets#1{\left\{\,#1\,\right\}}
\def\set#1#2{\{\; #1 \mid #2\;\}}
\def\sets#1{\{#1\}}
\newcommand{\CIRC}{\mathit{CIRC}}
\newcommand{\COMB}{\mathit{COMB}}
\newcommand{\COMM}{\mathit{COMM}}
\newcommand{\DEF}{\mathit{DEF}}
\newcommand{\FIN}{\mathit{FIN}}
\newcommand{\MON}{\mathit{MON}}
\newcommand{\NC}{\mathit{NC}}
\newcommand{\NIL}{\mathit{NIL}}
\newcommand{\ORD}{\mathit{ORD}}
\newcommand{\PS}{\mathit{PS}}
\newcommand{\REG}{\mathit{REG}}
\newcommand{\SUF}{\mathit{SUF}}
\newcommand{\UF}{\mathit{UF}}
\newcommand{\RL}{\mathit{RL}}
\newcommand{\Suf}{\mathit{Suf}}
\def\cF{{\cal F}}
\def\cS{{\cal S}}
\def\cIC{{\cal IC}}
\def\Lra{\Longrightarrow}
\def\ra{\rightarrow}
\tikzstyle{to}=[->, >=stealth]
\tikzstyle{hier}=[->, >=angle 60]
\tikzstyle{hiero}=[->, >=angle 60, dashed]
\tikzstyle{state}=[circle,draw,inner sep=2pt,minimum size=8mm]
\tikzstyle{edgeLabel}=[inner sep=0.5mm,fill=white,text=black!60]
\newtheorem{theorem}{Theorem}[section]
\newtheorem{lemma}[theorem]{Lemma}
\newenvironment{proof}{{\em Proof. }}{{}\hspace*{\fill}$\Box$ \par \medskip }
\newenvironment{proof*}{{\em Proof. }}{\par \medskip }
\newlength{\btlabelwidth}\setlength{\btlabelwidth}{\labelwidth}
\newlength{\btleftmargin}\setlength{\btleftmargin}{\leftmargin}
\definecolor{lightred}{rgb}{1,0.5,0.5}
\begin{document}
\maketitle
\begin{abstract}
In this paper, we continue the research on the power of contextual grammars with selection languages from 
subfamilies of the family of regular languages. In the past, two independent hierarchies have been obtained 
for external and internal contextual grammars, one based on selection languages defined by structural properties 
(finite, monoidal, nilpotent, combinational, definite, ordered, non-counting,  power-sepa\-rating,
suffix-closed, commutative, circular, or union-free languages), 
the other one based on selection languages defined by resources (number of non-terminal symbols, 
production rules, or states needed for generating or accepting them). In a previous paper,
the language families of these hierarchies for external contextual grammars were compared and the hierarchies merged.
In the present paper, we compare the language families of these hierarchies for internal contextual grammars 
and merge these hierarchies. 
\end{abstract}

\section{Introduction}
Contextual grammars were introduced by S.~Marcus in \cite{Mar69} as a formal model that might be 
used in the generation of natural languages. The derivation steps consist in adding contexts 
to given well formed sentences, starting from an initial finite basis. Formally, a context 
is given by a pair $(u,v)$ of words and 
inserting it externally into a word $x$ gives the word $uxv$ 
whereas inserting it internally gives all words~$x_1ux_2vx_3$ when $x=x_1x_2x_3$. In order to
control the derivation process, contextual grammars with selection 
were defined. In such contextual grammars, a context $(u,v)$ may be added only if the 
surrounded word $x$ or $x_2$ belongs to a language which is associated with the context.
Language families were defined where all selection languages in a contextual grammar belong
to some language family $\cF$. Such contextual grammars are said to be `with selection in the family $\cF$'.
Contextual grammars have been studied where
the family $\cF$ is taken from the Chomsky hierarchy (see~\cite{Ist78,Pau98,handbook} and references therein).

In \cite{Das05}, the study of external contextual grammars with selection in special regular sets 
was started. Finite, combinational, definite, nilpotent, regular suffix-closed, regular commutative 
languages and languages of the form $V^*$ for some alphabet $V$ were considered. 
The research was continued in \cite{DasManTru11b,DasManTru12a,DasManTru12b,ManTru12}
where further subregular families of selection languages were considered and the effect of 
subregular selection languages on the generative power of external and internal contextual grammars 
was investigated. A recent survey can be found in \cite{Tru21-fi} which presents for each type of 
contextual grammars (external and internal ones) two hierarchies, one based on selection languages
defined by structural properties (finite, monoidal, nilpotent, combinational, definite, ordered, 
non-counting,  power-separating, suffix-closed, commutative, circular, or union-free languages), 
the other one based on selection languages defined by resources (number of non-terminal symbols, 
production rules, or states needed for generating or accepting them). In \cite{Tru23-dcfs},
the language families of these hierarchies for external contextual grammars were compared and the hierarchies merged.
In the present paper, we 
compare the language families of these hierarchies for internal contextual grammars and merge the hierarchies.

The internal case is different from the case of external contextual grammars, as there are two 
main differences between the ways in which words are derived. In the case of internal contextual 
grammars, it is possible that the insertion of a context into a sentential form can be done at 
more than one place, such that the derivation becomes in some sense non-deterministic; in the
case of external grammars, once a context was selected, there is at most one way to insert it: 
wrapped around the sentential form, when this word is in the selection language of the context. 
On the other hand, the outermost ends of a word derived externally have been added at the end 
of the derivation, whereas derived internally the ends could have been at the ends of the word 
already from the beginning since some inner part can be `pumped'. 
If a context can be added internally, then it can be added arbitrarily often (because the 
subword where the context is wrapped around does not change) which does not necessarily hold
for external grammars.

In Section 2, we give the definitions and notation of the concepts used in this paper (languages,
grammars, automata, 
subregular language families, inclusion relations between these 
families, contextual grammars, and inclusion relations between the families generated by internal contextual
grammars where the selection languages belong to various subregular language families).
In Section 3, we present our results where, first, several languages are presented which later serve
as witness languages for proper inclusions or the incomparability of two language families and, later,
these languages are used to prove relations between the various language families generated by
internal contextual grammars with different types of selection.
Finally, in Section 4, we state some problems which are left open and give some ideas for future research.

\section{Preliminaries}
Throughout the paper, we assume that the reader is familiar with the basic concepts of the 
theory of automata and formal languages. 
For details, we refer to~\cite{handbook}. Here we only recall some notation and the definition 
of contextual grammars with selection which form the central notion of the paper.

\subsection{Languages, grammars, automata}

Given an alphabet $V$, we denote by $V^*$ and $V^+$ the set of all words and the set of all non-empty words over $V$,
respectively. The empty word is denoted by~$\lambda$. By $V^k$
and $V^{{}\leq k}$ for some natural number $k$,
we denote the set of all words of the alphabet $V$ with exactly $k$ letters
and the set of all words over $V$ with at most $k$ letters, respectively.
For a word $w$ and a letter $a$, we denote the length of $w$ by $|w|$ and the number of occurrences of the letter $a$ in the
word~$w$ by $|w|_a$. For a set $A$, we denote its cardinality by $|A|$.

A right-linear grammar is a quadruple 
\[G=(N,T,P,S)\]
where $N$ is a finite set of non-terminal symbols, $T$ is a finite set of
terminal symbols, $P$ is a finite set of production rules of the form $A\ra wB$ or $A\ra w$ with $A,B\in N$ and~$w\in T^*$,
and $S\in N$ is the start symbol. Such a grammar is called regular, if all the rules are of the form $A\ra xB$
or $A\ra x$ with $A,B\in N$ and~$x\in T$ or $S\ra\lambda$. The language generated by a right-linear or regular grammar
is the set of all words over the terminal alphabet which are obtained from the start symbol $S$ by a successive replacement
of the non-terminal symbols according to the rules in the set $P$. Every language generated by a right-linear grammar
can also be generated by a regular grammar.

A deterministic finite automaton is a quintuple 
\[{\cal A}=(V,Z,z_0,F,\delta)\]
where $V$ is a finite set of input symbols, $Z$
is a finite set of states, $z_0\in Z$ is the initial state, $F\subseteq Z$ is a set of accepting states, and $\delta$ is
a transition function $\delta: Z\times V\to Z$. The language accepted by such an automaton is the set of all input words 
over the alphabet $V$ which lead letterwise by the transition function from the initial state to an accepting state.

The set of all languages generated by some right-linear grammar coincides with the set of all languages accepted by a
deterministic finite automaton. All these languages are called regular and form a family denoted by $\REG$. Any subfamily
of this set is called a subregular language family.



\subsection{Resources restricted languages}

We define subregular families by restricting the resources needed
for generating or accepting their elements:
\begin{align*}
\RL_n^V &= \Set{ L }{ \text{$L$ is generated by a right-linear grammar with at most $n$ non-terminal symbols} },\\
\RL_n^P &= \Set{ L }{ \text{$L$ is generated by a right-linear grammar with at most $n$ production rules} },\\
\REG_n^Z &= \Set{ L }{ \text{$L$ is accepted by a deterministic finite automaton with at most $n$ states} }.
\end{align*}

\subsection{Subregular language families based on the structure}

We consider the following restrictions for regular languages. Let $L$ be a language
over an alphabet $V$. 
With respect to the alphabet $V$, the language $L$ is said to be
\begin{itemize}
\item \emph{monoidal} if and only if $L=V^*$,
\item \emph{nilpotent} if and only if it is finite or its complement $V^*\setminus L$ is finite,
\item \emph{combinational} if and only if it has the form
$L=V^*X$
for some subset $X\subseteq V$,
\item \emph{definite} if and only if it can be represented in the form
$L=A\cup V^*B$
where~$A$ and~$B$ are finite subsets of $V^*$,
\item \emph{suffix-closed} (or \emph{fully initial} or \emph{multiple-entry} language) if
and only if, for any two words~$x\in V^*$ and~$y\in V^*$, the relation $xy\in L$ implies
the relation~$y\in L$,
\item \emph{ordered} if and only if the language is accepted by some deterministic finite
automaton 
\[{\cal A}=(V,Z,z_0,F,\delta)\]
with an input alphabet $V$, a finite set $Z$ of states, a start state $z_0\in Z$, a set $F\subseteq Z$ of
accepting states and a transition mapping $\delta$ where $(Z,\preceq )$ is a totally ordered set and, for
any input symbol~$a\in V$, the relation $z\preceq z'$ implies $\delta (z,a)\preceq \delta (z',a)$,
\item \emph{commutative} if and only if it contains with each word also all permutations of this
word,
\item \emph{circular} if and only if it contains with each word also all circular shifts of this
word,
\item \emph{non-counting} (or \emph{star-free}) if and only if there is a natural
number $k\geq 1$ such that, for any three words $x\in V^*$, $y\in V^*$, and $z\in V^*$, it 
holds~$xy^kz\in L$ if and only if $xy^{k+1}z\in L$,
\item \emph{power-separating} if and only if, there is a natural number $m\geq 1$ such that
for any word~$x\in V^*$, either
$J_x^m \cap L = \emptyset$
or
$J_x^m\subseteq L$
where
$J_x^m = \set{ x^n}{n\geq m}$,
\item \emph{union-free} if and only if $L$ can be described by a regular expression which
is only built by product and star.
\end{itemize}

We remark that monoidal, nilpotent, combinational, definite, ordered, and union-free 
languages are regular, whereas non-regular languages of the other types mentioned above exist.
Here, we consider among the commutative, circular, suffix-closed, non-counting,
and power-separating languages only those which are also regular.

Some properties of the languages of the classes mentioned above can be found in
\cite{Shyr91} (monoids), \cite{GecsegPeak72} (nilpotent languages), \cite{Ha69} (combinational and commutative languages),
\cite{PerRabSham63} (definite languages), \cite{GilKou74} and \cite{BrzoJirZou14} (suffix-closed languages),
\cite{ShyThi74-ORD} (ordered languages), \cite{Das79} (circular languages),
\cite{McNPap71} (non-counting languages), \cite{ShyThi74-PS} (power-separating languages),
\cite{Brzo62} (union-free languages).

By $\FIN$, $\MON$, $\NIL$, $\COMB$, $\DEF$, $\SUF$, $\ORD$, $\COMM$, $\CIRC$, $\NC$, $\PS$, $\UF$, 
and $\REG$, we denote the families of all finite, 
monoidal, nilpotent, combinational, definite, regular suffix-closed, ordered, regular commutative, regular circular, 
regular non-counting, regular power-separating, union-free, 
and regular, languages, respectively.


As the set of all families under consideration, we set
\begin{align*}
\mathfrak{F} &= \sets{ \FIN, \MON, \NIL, \COMB, \DEF, \SUF, \ORD, \COMM, \CIRC, \NC, \PS, \UF}\\
    &\quad{}\cup\set{ \RL_n^V}{n\geq 1}\cup\set{\RL_n^P}{n\geq 1}\cup\set{\REG_n^Z}{n\geq 1}.
\end{align*}

\subsection{Hierarchy of subregular families of languages}

We present here a hierarchy of the families of the aforementioned set $\mathfrak{F}$ with respect to the
set theoretic inclusion relation.

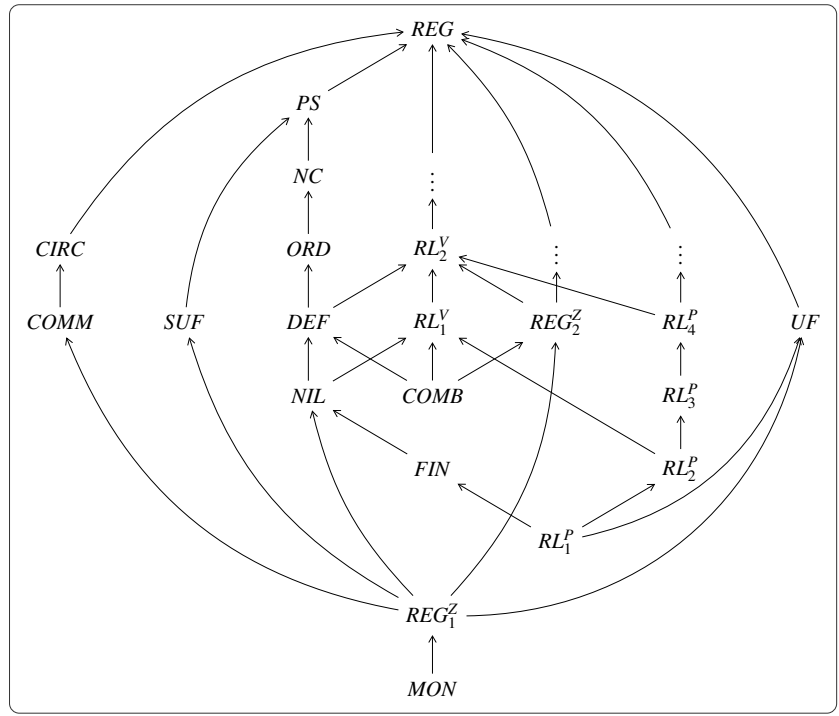
\begin{figure}[htb]
\centerline{%
\scalebox{0.75}{\begin{tikzpicture}[node distance=13mm and 22mm,on grid=true,
background rectangle/.style=
{
draw=black!80,
rounded corners=1ex},
show background rectangle]
\node (REG) {$\REG$};
\node (PS) [below left=of REG] {$\PS$};
\node (NC) [below=of PS] {$\NC$};
\node (ORD) [below=of NC] {$\ORD$};
\node (DEF) [below=of ORD] {$\DEF$};
\node (COMB) [below right=of DEF] {$\COMB$};
\node (NIL) [below=of DEF] {$\NIL$};
\node (FIN) [below right=of NIL] {$\FIN$};
\node (SUF) [left=of DEF] {$\SUF$};
\node (COMM) [left=of SUF] {$\COMM$};
\node (CIRC) [above=of COMM] {$\CIRC$};
\node (V1) [right=of DEF] {$\RL_1^V$};
\node (V2) [above=of V1] {$\RL_2^V$};
\node (Vn) [right=of NC] {$\vdots$};
\node (Z2) [right=of V1] {$\REG_2^Z$};
\node (Zn) [right=of V2] {$\vdots$};
\node (dummy) [right=of Z2] {};
\node (d1) [below=of dummy] {};
\node (UF) [right=of dummy] {$\UF$};
\node (d2) [below=of d1] {};
\node (P1) [below left=of d2] {$\RL_1^P$};
\node (P2) [above right=of P1] {$\RL_2^P$};
\node (P3) [above=of P2] {$\RL_3^P$};
\node (P4) [above=of P3] {$\RL_4^P$};
\node (Pn) [above=of P4] {$\vdots$};
\node (d3) [below=of FIN] {};
\node (Z1) [below=of d3] {$\REG_1^Z$};
\node (MON) [below=of Z1] {$\MON$};
\draw[hier] (FIN) to 
  (NIL);
\draw[hier] (MON) to 
  (Z1);
\draw[hier] (Z1) [bend right=-15] to 
  (NIL);
\draw[hier] (Z1) [bend right=-22] to 
  (SUF);
\draw[hier] (Z1) [bend right=-29] to 
  (COMM);
\draw[hier] (Z1) [bend right=39] to 
  (UF);
\draw[hier] (Z1) [bend right=20] to 
  (Z2);
\draw[hier] (P1) to 
  (FIN);
\draw[hier] (P1) [bend right=30] to 
  (UF);
\draw[hier] (V1) to 
  (V2);
\draw[hier] (V2) to 
  (Vn);
\draw[hier] (Vn) to 
  (REG);
\draw[hier] (Z2) to 
  (V2);
\draw[hier] (Z2) to 
  (Zn);
\draw[hier] (Zn) [bend right=15] to 
  (REG);
\draw[hier] (P1) to 
  (P2);
\draw[hier] (P2) to 
  (P3);
\draw[hier] (P2) 
  to 
  (V1);
\draw[hier] (P3) to 
  (P4);
\draw[hier] (P4) to 
  (Pn);
\draw[hier] (P4) 
  to 
  (V2);
\draw[hier] (Pn) [bend right=22] to 
  (REG);
\draw[hier] (NIL) to 
  (DEF);
\draw[hier] (NIL) 
  to 
  (V1);
\draw[hier] (COMB) 
  to 
  (DEF);
\draw[hier] (COMB) to 
  (V1);
\draw[hier] (COMB) to 
  (Z2);
\draw[hier] (ORD) to 
  (NC);
\draw[hier] (DEF) to 
  (ORD);
\draw[hier] (DEF) to 
  (V2);
\draw[hier] (NC) to 
  (PS);
\draw[hier] (PS) to 
  (REG);
\draw[hier] (SUF) [bend right=-22] to 
  (PS);
\draw[hier] (COMM) to 
  (CIRC);
\draw[hier] (CIRC) [bend right=-25] to 
  (REG);
\draw[hier] (UF) [bend right=29] to 
  (REG);
\end{tikzpicture}}}
\caption{Hierarchy of subregular language families}\label{fig-subreg-hier}
\end{figure}

\begin{theorem}\label{th-subreg-hier}
The inclusion relations presented in Figure~\ref{fig-subreg-hier} hold.
An arrow from an entry~$X$ to an entry~$Y$ depicts the proper inclusion $X\subset Y$;
if two families are not connected by a directed path, then they are incomparable.
\end{theorem}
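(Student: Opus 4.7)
The plan is to assemble the hierarchy in three layers, leveraging the cited literature wherever possible. First, I would verify that each arrow in Figure~\ref{fig-subreg-hier} represents an inclusion. The inclusions among the purely structural families ($\FIN\subset\NIL$, $\NIL\subset\DEF$, $\COMB\subset\DEF$, $\DEF\subset\ORD$, $\ORD\subset\NC$, $\NC\subset\PS$, $\COMM\subset\CIRC$, $\SUF\subset\PS$, etc.) are classical and are collected in \cite{Tru21-fi} together with \cite{Shyr91,GecsegPeak72,PerRabSham63,ShyThi74-ORD,McNPap71,ShyThi74-PS,Brzo62}, so I would simply cite them. The chains $\RL_n^V\subset\RL_{n+1}^V$, $\RL_n^P\subset\RL_{n+1}^P$, and $\REG_n^Z\subset\REG_{n+1}^Z$ are immediate, since a grammar or automaton may use strictly fewer resources than permitted. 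The crossings between the resource axes and the structural families (such as $\MON\subset\REG_1^Z$, $\REG_1^Z\subset\NIL$, $\REG_1^Z\subset\SUF$, $\REG_1^Z\subset\COMM$, $\NIL\subset\RL_1^V$, $\COMB\subset\RL_1^V$, $\DEF\subset\RL_2^V$, $\RL_1^P\subset\FIN$, $\RL_1^P\subset\UF$) would follow from straightforward conversions between DFAs and right-linear grammars together with the defining shape of each structural class; for example, a one-state DFA accepts only $\emptyset$ or $V^*$, and both are nilpotent, suffix-closed, and commutative.

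Second, for the properness of every displayed inclusion I would exhibit a separating witness. Along the three resource axes I would use Myhill--Nerode-type arguments to sharply lower-bound the required number of states, nonterminals, or rules on tailored languages such as $\{a^{n}\}$-like singletons and short cyclic languages. Along the structural axis the standard small witnesses collected in \cite{Tru21-fi} suffice: $\{a\}$ separates $\FIN$ from $\MON$, $\{a\}^{*}$ separates $\REG_1^Z$ from $\FIN$, $V^{*}\{a\}$ separates $\COMB$ from $\NIL$, and analogous short witnesses cover the remaining arrows.

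Third, for each pair of families not joined by a directed path I would supply two witness languages establishing mutual incomparability. The purely structural non-edges (e.g., $\COMM$ vs.\ $\COMB$, $\SUF$ vs.\ $\ORD$, $\UF$ vs.\ $\PS$, $\CIRC$ vs.\ $\NC$) are already treated in \cite{Tru21-fi} and can simply be cited. The principal obstacle is the many crossings between the structural axis and the resource axes: to separate, say, $\COMM$ from $\RL_2^V$ in both directions one needs a commutative regular language whose every right-linear grammar provably uses at least three nonterminals \emph{and} a language generable with two nonterminals that is not commutative. Each such pair requires a bespoke minimality argument to sharply lower-bound the resource count together with a structural check, and the bookkeeping grows quickly with the number of families. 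I would therefore tabulate the non-edges of the diagram systematically and dispatch them case by case, reusing the minimality constructions from \cite{Tru23-dcfs} wherever possible so that the catalogue of witnesses stays compact.
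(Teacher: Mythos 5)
You should first note that the paper does not actually prove this theorem: Theorem~\ref{th-subreg-hier} is imported wholesale, and the entire ``proof'' in the text is the sentence referring the reader to \cite{Tru18-TRsubreg}. So there is no in-paper argument to compare against; the relevant comparison is with the strategy of that technical report, and your three-layer plan (establish each arrow as an inclusion, exhibit a witness for properness, exhibit two witnesses for each incomparability) is indeed the standard way these subregular hierarchies are assembled there. In that sense the outline is sound, and several of your concrete observations (the monotone resource chains, the fact that a one-state DFA accepts only $\emptyset$ or $V^*$, the conversions between DFAs and right-linear grammars for the crossing edges) are exactly the right ingredients.

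As a proof, however, the proposal is a programme rather than an argument: essentially every step that carries real content is announced but not executed. The properness of the resource chains is asserted to follow from ``Myhill--Nerode-type arguments'' on unspecified languages; the many incomparabilities between the structural families and the families $\RL_n^V$, $\RL_n^P$, $\REG_n^Z$ --- which you correctly identify as the principal obstacle --- are deferred to a case-by-case tabulation that is never carried out. These are precisely the parts for which \cite{Tru18-TRsubreg} exists, so a self-contained proof would have to supply them. There are also small misstatements in the second layer: $\{a\}$ versus $\MON$, $\{a\}^*$ versus $\FIN$, and $V^*\{a\}$ versus $\NIL$ are witnesses for \emph{incomparabilities} (none of the pairs $(\FIN,\MON)$, $(\REG_1^Z,\FIN)$, $(\COMB,\NIL)$ is joined by an arrow in Figure~\ref{fig-subreg-hier}), yet you list them under the properness of displayed inclusions; and \cite{Tru23-dcfs} concerns the hierarchy of language families generated by external contextual grammars, not the subregular hierarchy itself, so it is not the place from which to reuse minimality constructions --- the correct source is \cite{Tru18-TRsubreg}.
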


For proofs and references to proofs of the relations, we refer to \cite{Tru18-TRsubreg}.

\subsection{Contextual grammars}

Let $\cF$ be a family of languages. A contextual grammar with selection in $\cF$ is a triple
\[G=(V,\cS,A)\]
where
\begin{itemize}
\item $V$ is an alphabet, 
\item $\cS$ is a finite set of selection pairs $(S,C)$ where $S$ is a selection language over some subset $U$ of the
alphabet $V$ which belongs to the family $\cF$ with respect to the alphabet $U$ and where $C\subset V^*\times V^*$ 
is a finite set of contexts with the condition, for each context $(u,v)\in C$, at least one side is not empty: $uv\not=\lambda$,
\item $A$ is a finite subset of $V^*$ (its elements are called axioms).
\end{itemize}



Let $G=(V,\cS,A)$ be a contextual grammar with selection.
A direct internal derivation step in $G$ is defined as follows: a word~$x$ derives a word $y$ 
(written as $x\Lra y$) if and only if there are words $x_1$,~$x_2$,~$x_3$
with~$x_1x_2x_3=x$ and there is a selection pair~$(S,C)\in\cS$ such that $x_2\in S$ and~$y=x_1ux_2vx_3$ 
for some pair~$(u,v)\in C$.
Intuitively, we can only wrap a context~$(u,v)\in C$ around a subword $x_2$ of $x$ if $x_2$ belongs to 
the corresponding selection language~$S$.


By $\Lra^*$, we denote the reflexive and transitive closure of the relation~$\Lra$. 
The language generated by $G$ is defined as
\[L=\set{ z }{ x\Lra^* z \mbox{ for some } x\in A }.\]
By~$\cIC(\cF)$, we denote the family of all languages generated internally by contextual grammars 
with selection in $\cF$. When a contextual grammar works in the internal mode, we call it an internal 
contextual grammar. 

From previous research, we have the two hierarchies depicted in Figure~\ref{ic-fig-sep}.
An arrow from an entry~$X$ to an entry~$Y$ depicts the proper inclusion $X\subset Y$; a solid arrow
indicates that the inclusion is proper, the dashed arrow from $\cIC(\ORD)$ to $\cIC(\NC)$ indicates that it
is not known so far whether this inclusion is proper or whether equality holds.
The label at an edge shows in which paper the relation was proved. 

\begin{figure}[htb]
\centerline{%
\scalebox{0.83}{\begin{tikzpicture}[node distance=24.95mm and 25mm,on grid=true,
background rectangle/.style=
{
draw=black!80,
rounded corners=1ex},
show background rectangle]
\node (REG) {$\cIC(\REG)\stackrel{\text{\cite{DasManTru12b}}}{=}\cIC(\UF)$};
\node (PS) [below=of REG] {$\cIC(\PS)$};
\node (NC) [below=of PS] {$\cIC(\NC)$};
\node (ORD) [below=of NC] {$\cIC(\ORD)$};
\node (DEF) [below=of ORD] {$\cIC(\DEF)$};
\node (COMB) [below=of DEF] {$\cIC(\COMB)$};
\node (NIL) [left=of COMB] {$\cIC(\NIL)$};
\node (MON) [below=of COMB] {$\cIC(\MON)$};
\node (FIN) [left=of MON] {$\cIC(\FIN)$};
\node (SUF) [right=of ORD] {$\cIC(\SUF)$};
\node (COMM) [right=of SUF] {$\cIC(\COMM)$};
\node (CIRC) [above=of COMM] {$\cIC(\CIRC)$};
\draw[hier] (FIN) to node[edgeLabel]{\cite{DasManTru12b}} (NIL);
\draw[hier] (MON) to node[edgeLabel]{\cite{DasManTru12b}} (COMB);
\draw[hier] (MON) to node[edgeLabel]{\cite{DasManTru12b}} (NIL);
\draw[hier] (MON) [bend right=25]to node[edgeLabel]{\cite{DasManTru12b}} (SUF);
\draw[hier] (NIL) to node[edgeLabel,near start]{\cite{DasManTru12b}} (DEF);
\draw[hier] (MON) [bend right=30]to node[edgeLabel]{\cite{DasManTru12b}} (COMM);
\draw[hier] (COMB) to node[edgeLabel]{\cite{DasManTru12b}} (DEF);
\draw[hiero] (ORD) to (NC);
\draw[hier] (DEF) to node[edgeLabel]{\cite{Tru14-ncma}} (ORD);
\draw[hier] (NC) to node[edgeLabel]{\cite{Tru21-fi}} (PS);
\draw[hier] (PS) to node[edgeLabel]{\cite{Tru21-fi}} (REG);
\draw[hier] (COMM) to node[edgeLabel]{\cite{DasManTru12b}} (CIRC);
\draw[hier] (CIRC) [bend right=15]to node[edgeLabel]{\cite{DasManTru12b}} (REG);
\draw[hier] (SUF) [bend right=20]to node[edgeLabel]{\cite{Tru21-fi}} (PS);
\end{tikzpicture}}
\scalebox{0.832}{\begin{tikzpicture}[node distance=15mm and 28mm,on grid=true,
background rectangle/.style=
{
draw=black!80,
rounded corners=1ex},
show background rectangle]
\node (REG) {$\cIC(\REG)$};
\node (V10) [below=of REG] {$\vdots$};
\node (V9) [below=of V10] {$\cIC(\RL_n^V)$};
\node (V8) [below=of V9] {};
\node (V7) [below=of V8] {$\cIC(\RL_{n-1}^V)$};
\node (V6) [below=of V7] {$\vdots$};
\node (V3) [below=of V6] {$\cIC(\RL_2^V)$};
\node (V2) [below=of V3] {};
\node (V1) [below=of V2] {$\cIC(\RL_1^V)$};
\node (P12) [below left=of REG] {};
\node (P11) [below=of P12] {$\vdots$};
\node (P10) [below=of P11] {$\cIC(\RL_{2n}^P)$};
\node (P9) [below=of P10] {$\cIC(\RL_{2n-1}^P)$};
\node (P8) [below=of P9] {$\cIC(\RL_{2n-2}^P)$};
\node (P7) [below=of P8] {$\vdots$};
\node (P4) [below=of P7] {$\cIC(\RL_4^P)$};
\node (P3) [below=of P4] {$\cIC(\RL_3^P)$};
\node (P2) [below=of P3] {$\cIC(\RL_2^P)$};
\node (P1) [below=of P2] {$\cIC(\RL_1^P)$};
\node (Z11) [below right=of REG] {};
\node (Z10) [below=of Z11] {$\vdots$};
\node (Z9) [below=of Z10] {$\cIC(\REG_n^Z)$};
\node (Z8) [below=of Z9] {};
\node (Z7) [below=of Z8] {$\cIC(\REG_{n-1}^Z)$};
\node (Z6) [below=of Z7] {$\vdots$};
\node (Z3) [below=of Z6] {$\cIC(\REG_2^Z)$};
\node (Z2) [below=of Z3] {};
\node (Z1) [below=of Z2] {$\cIC(\REG_1^Z)$};
\draw[hier] (P1) to node[edgeLabel,pos=.4]{\cite{Tru21-fi}} (P2);
\draw[hier] (P2) to node[edgeLabel,pos=.4]{\cite{Tru21-fi}} (P3);
\draw[hier] (P3) to node[edgeLabel,pos=.4]{\cite{Tru21-fi}} (P4);
\draw[hier] (P4) to node[edgeLabel,pos=.4]{\cite{Tru21-fi}} (P7);
\draw[hier] (P7) to node[edgeLabel,pos=.4]{\cite{Tru21-fi}} (P8);
\draw[hier] (P8) to node[edgeLabel,pos=.4]{\cite{Tru21-fi}} (P9);
\draw[hier] (P9) to node[edgeLabel,pos=.4]{\cite{Tru21-fi}} (P10);
\draw[hier] (P10) to node[edgeLabel,pos=.4]{\cite{Tru21-fi}} (P11);
\draw[hier] (P11) [bend right=-22] to node[edgeLabel]{\cite{Tru21-fi}} (REG);
\draw[hier] (V1) to node[edgeLabel]{\cite{Tru21-fi}} (V3);
\draw[hier] (V3) to node[edgeLabel,pos=.4]{\cite{Tru21-fi}} (V6);
\draw[hier] (V6) to node[edgeLabel,pos=.4]{\cite{Tru21-fi}} (V7);
\draw[hier] (V7) to node[edgeLabel]{\cite{Tru21-fi}} (V9);
\draw[hier] (V9) to node[edgeLabel,pos=.4]{\cite{Tru21-fi}} (V10);
\draw[hier] (V10) to node[edgeLabel,pos=.4]{\cite{Tru21-fi}} (REG);
\draw[hier] (Z1) to node[edgeLabel]{\cite{Tru21-fi}} (Z3);
\draw[hier] (Z3) to node[edgeLabel,pos=.4]{\cite{Tru21-fi}} (Z6);
\draw[hier] (Z6) to node[edgeLabel,pos=.4]{\cite{Tru21-fi}} (Z7);
\draw[hier] (Z7) to node[edgeLabel]{\cite{Tru21-fi}} (Z9);
\draw[hier] (Z9) to node[edgeLabel,pos=.4]{\cite{Tru21-fi}} (Z10);
\draw[hier] (Z10) [bend right=22] to node[edgeLabel]{\cite{Tru21-fi}} (REG);
\draw[hier] (P2) to node[edgeLabel,pos=.4]{\cite{Tru21-fi}} (V1);
\draw[hier] (P4) to node[edgeLabel,pos=.4]{\cite{Tru21-fi}} (V3);
\draw[hier] (P8) to node[edgeLabel,pos=.4]{\cite{Tru21-fi}} (V7);
\draw[hier] (P10) to node[edgeLabel,pos=.4]{\cite{Tru21-fi}} (V9);
\draw[hier] (Z1) to node[edgeLabel,pos=.4]{\cite{Tru21-fi}} (V1);
\draw[hier] (Z3) to node[edgeLabel,pos=.4]{\cite{Tru21-fi}} (V3);
\draw[hier] (Z7) to node[edgeLabel,pos=.4]{\cite{Tru21-fi}} (V7);
\draw[hier] (Z9) to node[edgeLabel,pos=.4]{\cite{Tru21-fi}} (V9);
%
\end{tikzpicture}}
}
\caption{Hierarchies of the language families by internal contextual grammars with
selection languages defined by structural properties (left) or restricted resources (right). 
An edge label refers to the paper where the respective inclusion is proved.}\label{ic-fig-sep}
\end{figure}
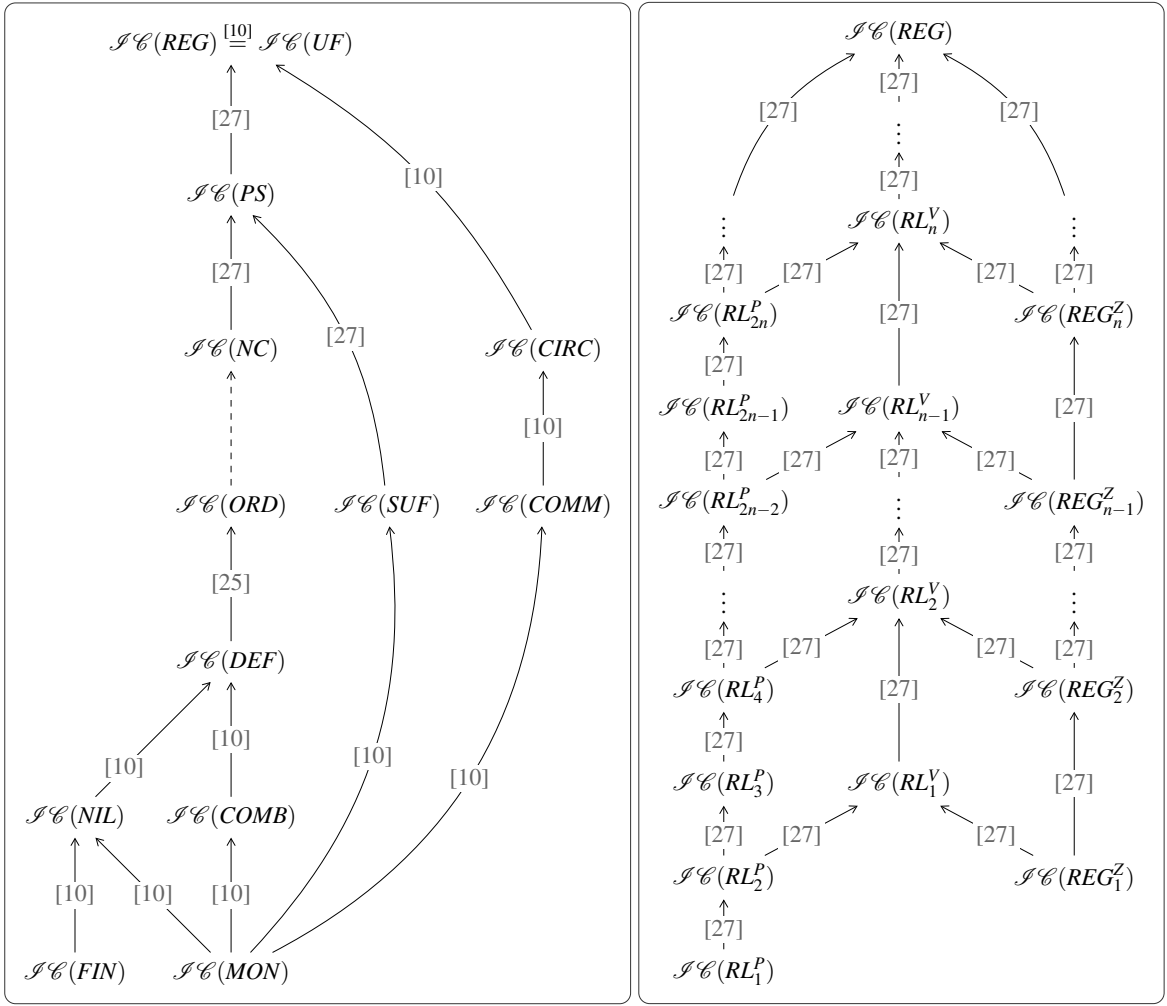

If two families~$X$ and~$Y$ are not connected by a directed path, then~$X$ and~$Y$ are
in most cases incomparable. The only exceptions are the relations of the family $\cIC(\SUF)$
to the families $\cIC(\ORD)$ and~$\cIC(\NC)$ where it is not known whether they are incomparable
or whether $\cIC(\SUF)$ is a subset of the other and the relation of the family $\cIC(REG_{n+1}^Z)$
to $\cIC(RL_n^V)$ for $n\geq 1$ where it is not known whether they are incomparable
or whether $\cIC(REG_{n+1}^Z)$ is a subset of $\cIC(RL_n^V)$. 

We note here that in \cite{Das05, DasManTru11b, DasManTru12a, DasManTru12b, ManTru12, Tru14-ncma, Das-Analele15}
a slightly different definition was used than in \cite{Tru21-fi, DasTru22-ncma} and the present
paper. This difference consists in the alphabet of the selection languages. In the early papers,
the selection languages belong to some subfamily $\cF$ with respect to the whole alphabet $V$ of the
contextual grammar whereas in later papers, the selection languages belong to some subfamily $\cF$
with respect to some subalphabet $U\subseteq V$ of the contextual grammar. The language $\{a\}^*\{a\}^5$,
for instance, is nilpotent with respect to the alphabet $\{a\}$ but not with respect to the alphabet $\{a,b\}$.
For almost all proofs in the mentioned papers, there is no difference between using one or the other definition.
The only proof which relies on the definition is that of the relation $L(G)\notin\cIC(\DEF)$ for
\[G=(V,\sets{(\Suf(\sets{d}^*\sets{b}),\sets{(a,b)}),(\sets{a,\lambda},\sets{(c,d)})},\sets{ecadb})\]
from \cite[Lemma 21]{DasManTru12b} (also used in \cite[Theorem~3.5]{Tru14-ncma}).
However, the proof is valid also with the subalphabet definition if one changes the axiom $ecadb$ to the word $dcadb$.

From the definition follows that the subset relation is preserved under the use of contextual
grammars: if we allow more, we do not obtain less.

\begin{lemma}\label{l-context-gramm-monoton}
For any two language classes $X$ and $Y$ with $X\subseteq Y$,
we have the inclusion
\[\cIC(X)\subseteq\cIC(Y).\]
\end{lemma}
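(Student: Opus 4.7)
The plan is to observe that this inclusion is essentially a definitional unfolding: the class of contextual grammars changes only through the admissible selection languages, so enlarging the pool of admissible selection languages cannot remove any previously generable language. I would therefore start with an arbitrary language $L\in\cIC(X)$ and produce, without modification, a contextual grammar with selection in $Y$ generating the very same language.

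Concretely, the first step is to invoke the definition of $\cIC(X)$ to obtain a contextual grammar $G=(V,\cS,A)$ with selection in $X$ such that $L=L(G)$. By the definition of ``selection in $X$'', every selection pair $(S,C)\in\cS$ has the property that $S$ is a language over some subalphabet $U\subseteq V$ belonging to the family $X$ with respect to $U$. The second step is to apply the hypothesis $X\subseteq Y$: for every such $(S,C)\in\cS$, the language $S$ now belongs to $Y$ with respect to the same subalphabet $U$. Consequently, the triple $G=(V,\cS,A)$ already satisfies all conditions required for being a contextual grammar with selection in $Y$; no change to the alphabet, the selection set, or the axioms is necessary.

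The third step is to verify that the language generated does not change. Since the components of $G$ are literally the same in both readings, the internal derivation relation $\Lra$ defined from $\cS$ is identical in both cases, as is its reflexive--transitive closure $\Lra^*$. Thus $L(G)$, computed as the set of words derivable from some axiom in $A$, is the same language whether $G$ is viewed as having selection in $X$ or in $Y$. Hence $L=L(G)\in\cIC(Y)$, and the desired inclusion $\cIC(X)\subseteq\cIC(Y)$ follows.

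There is no real obstacle here; the only point worth a moment's care is the subalphabet convention mentioned in the preliminaries, which is why I explicitly keep the same subalphabet $U$ for each selection language when transferring $G$ from ``selection in $X$'' to ``selection in $Y$''. With that observation in place, the proof reduces to a one-line verification.
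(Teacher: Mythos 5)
Your proof is correct and is exactly the argument the paper intends: the lemma is stated as an immediate consequence of the definition (a grammar with selection in $X$ is already a grammar with selection in $Y$ when $X\subseteq Y$, generating the same language), and the paper gives no further proof. Your explicit attention to keeping the same subalphabet $U$ for each selection language is a reasonable extra care, but the approach is the same.
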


\noindent In the following section, we relate the families of the two hierarchies mentioned above.
\pagebreak

\section{Results}

When we speak about contextual grammars in this section, we mean internal contextual grammars (whose languages
are generated in the internal mode).

First, we present languages which will serve later as witness languages for proper inclusions or incomparabilities.

\begin{lemma}\label{ic-comp-l1}
Let $V=\{a,b,c,d,e\}$ be an alphabet, $G=(V,\sets{(S_1,C_1),(S_2,C_2)},\{c\})$ be a contextual grammar
with
\[\begin{array}{rcl@{\quad}rcl}
S_1 &=& \sets{b}^*\sets{c},& C_1 &=& \sets{(ab,ab)},\\
S_2 &=& \sets{aa}^*, & C_2 &=& \sets{(d,e)},
\end{array}\]
and $L=L(G)$ be the laguage generated.
Then 
\[L\in(\cIC(\RL_1^V)\cap\cIC(\RL_2^P)\cap\cIC(\REG_2^Z))\setminus\cIC(\PS).\]
\end{lemma}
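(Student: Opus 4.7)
The plan is to verify each positive membership by exhibiting a suitable contextual grammar for $L$ and to refute membership in $\cIC(\PS)$ through a parity argument.

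For $\cIC(\RL_1^V)$ and $\cIC(\RL_2^P)$, the grammar $G$ itself is a witness. The selection language $S_1=\{b\}^*\{c\}$ is generated by the right-linear grammar with single non-terminal $N$ and rules $N\to bN$, $N\to c$, and $S_2=\{aa\}^*$ by $N\to aaN$, $N\to\lambda$; each grammar uses one non-terminal and two rules. For $\cIC(\REG_2^Z)$, since $S_1$ requires a trap state and hence a $3$-state DFA over $\{b,c\}$, I replace $S_1$ by the combinational language $\widetilde{S}_1=\{b,c\}^*\{c\}$, which is accepted by a $2$-state DFA over the subalphabet $\{b,c\}$; the language $S_2$ admits a $2$-state DFA over $\{a\}$ (two states alternating on $a$, the initial one being accepting). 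The modified grammar $\widetilde{G}$ generates the same $L$: every sentential form contains exactly one $c$ (the axiom has one and no context contains a $c$), so the substrings of such a form that lie in $\widetilde{S}_1$ are exactly those of the shape $b^k c$, coinciding with the elements of $S_1$; hence $(ab,ab)$ can be wrapped around the same set of substrings in both grammars.

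For $L\notin\cIC(\PS)$ I use the witness families $w_n=da^{2n}eb^{2n}c(ab)^{2n}$ and $w_n'=da^{2n+1}eb^{2n+1}c(ab)^{2n+1}$. To show $w_n\in L$, I derive the sentential form $a^{2n}b^{2n}c(ab)^{2n}$ from $c$ by $2n$ applications of $(S_1,C_1)$, each time wrapping around the maximal $b^k c$-suffix (so the part before $c$ becomes the fully nested Dyck word $a^{2n}b^{2n}$), and then wrap $(d,e)$ around the prefix $a^{2n}\in\{aa\}^*$. For $w_n'\notin L$, I establish the structural invariant that in any $w\in L$ with exactly one $d$ and one $e$, the substring strictly between them is a run of $a$'s of even length: the unique source of the $(d,e)$ pair is $(S_2,C_2)$, which wraps around $x_2\in\{aa\}^*\subseteq\{a\}^*$, and no subsequent $(S_1,C_1)$-application can insert into the $d$--$e$ region because such wraps always include the unique $c$ within the selection subword $b^k c$ and $c$ lies outside that region. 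Since $w_n'$ has $2n+1$ (odd) $a$'s between $d$ and $e$, the invariant excludes it.

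Finally, suppose $L=L(G')$ for some $G'=(V,\cS',A')$ with each selection language in $\PS$. I analyze a derivation of $w_n$ in $G'$ focusing on the step introducing both $d$ and $e$ for the first time. In the canonical case this is a single wrap of a context $(u_1 d u_2,v_1 e v_2)\in C$ for some $(S,C)\in\cS'$ around a subword $x_2\in S$; the $d$-before-$e$ order of $L$ forbids placing $e$ in $u$ or $d$ in $v$, so this form is forced. The above invariant then forces $u_2,v_1,x_2\in\{a\}^*$, and after commuting this wrap to the end of the derivation (independent $(ab,ab)$-type wraps can be moved before it, since they modify disjoint regions of the form), the pre-wrap sentential form must be the Dyck-structured word $a^{2n}b^{2n}c(ab)^{2n}\in L$, forcing $u_1=u_2=v_1=v_2=\lambda$ (so the context is exactly $(d,e)$) and $x_2=a^{2n}\in S$. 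By pigeonhole some fixed $(S,C)\in\cS'$ serves for infinitely many $n$, so $S$ contains arbitrarily long $a$-powers; the PS property of $S$ then forces $a^k\in S$ for all $k$ above some threshold, in particular $a^{2n+1}\in S$ for large $n$. Since $a^{2n+1}b^{2n+1}c(ab)^{2n+1}\in L=L(G')$ is itself a reachable sentential form of $G'$, wrapping $(d,e)$ around its prefix $a^{2n+1}\in S$ yields $w_n'\in L(G')$, contradicting $w_n'\notin L$. The main obstacle will be making the reduction to this canonical case rigorous: handling $G'$-derivations where $d$ or $e$ originate from the axiom, where the introducing context surrounds the $d$--$e$ region with non-empty flanking material, or where subsequent wraps insert further $a$'s between $d$ and $e$, each requires a separate but analogous PS-swap argument based on the same even-length invariant.
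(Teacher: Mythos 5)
Your positive memberships are fine: the grammars for $\cIC(\RL_1^V)$ and $\cIC(\RL_2^P)$ are exactly the intended ones, and your replacement of $S_1$ by the combinational language $\{b,c\}^*\{c\}$ is a legitimate variant of the paper's replacement by $(\{b\}^*\{c\})^+$ — both rest on the same observation that every word of $L$ contains exactly one $c$, so only subwords of the form $b^kc$ can actually be selected. The final contradiction in your $\PS$-argument (once some selection language $S$ of $G'$ is known to contain a long even power $a^{2n}$ with associated context $(d,e)$, power-separation forces $a^{2n+1}\in S$, and wrapping $(d,e)$ around the prefix of $a^{2n+1}b^{2n+1}c(ab)^{2n+1}\in L$ produces a word outside $L$) is also exactly the paper's punchline.

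The genuine gap is the reduction to your ``canonical case,'' and you name it yourself. You analyze the derivation step of $G'$ that \emph{first introduces} $d$ and $e$ and then try to commute the remaining steps past it; but the remaining steps are applications of \emph{unknown} selection pairs of $G'$, not of $(S_1,C_1)$, so you have no right to call them ``independent $(ab,ab)$-type wraps'' or to assume they modify regions disjoint from the $d$--$e$ block. Without that, you cannot conclude that the pre-wrap sentential form is $a^{2n}b^{2n}c(ab)^{2n}$, which is what forces the context to be exactly $(d,e)$ and the selected subword to be $a^{2n}$. The cases you defer (axiom containing $d$ or $e$, flanking material around the context, later insertions between $d$ and $e$) are precisely the ones that make this route hard. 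The paper avoids all of this by analyzing the \emph{last} step of a derivation of a single word $w=da^{2k}eb^{2k}c(ab)^{2k}$, with $k$ chosen larger than the sum of $m$ and the maximal length of the axioms and contexts of $G'$: then $w\notin A$, the preceding word $w'$ is known to lie in $L$ (so its letter counts must balance), and a short case analysis on where $(u,v)$ can sit inside $w$ — using $|uv|<k$ to rule out $u$ containing both $d$ and $e$, and the count-balance of $L$ to rule out insertions confined to the region after $e$ — leaves $(u,v)=(d,e)$ around $x_2=a^{2k}$ as the only possibility. This also removes the need for your pigeonhole over infinitely many $n$, since $a^{2k}\in J_a^m$ already holds for the single chosen $k$. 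To repair your proof you should switch to this last-step analysis (or supply a full commutation argument valid for arbitrary contexts of $G'$, which is considerably more work).
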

\begin{proof}
The selection languages are generated by right-linear grammars with the following rules (and start symbol $S$):
\begin{align*}
& S_1:\  S\ra bS,\ S\ra c,\\
& S_2:\  S\ra aaS,\ S\ra \lambda.
\end{align*}
Since these rules contain one non-terminal symbol only and two rules each, we obtain 
\[L\in\cIC(\RL_1^V)\cap\cIC(\RL_2^P).\]

Since the words of the language $L$ contain only one letter $c$ (the axiom has no more and the contexts do not contain $c$),
the language $L$ is also generated if $S_1$ is replaced by the language $S_1'=(\{b\}^*\{c\})^+$ (the additional words
cannot be used for selection).

The selection languages $S_1'$ and $S_2$ are accepted by automata with two states each 
whose transition functions are given in the following diagram:
\begin{center}
\raisebox{6.5mm}{$S_1'$:}\quad
\begin{picture}(45,18)
\put(0,0){%
\begin{tikzpicture}[on grid,>=stealth',initial text={\sf start}
]
\node[state,minimum size=4mm,initial] (z0) at (0,0) {$z_0$};
\node[state,minimum size=4mm,accepting] (z1) at (2,0) {$z_1$};
\draw[->] (z0) [bend left=15]to node [above,sloped] {$c$} (z1);
\draw[->] (z0) .. controls +(75:1) and +(105:1) .. node [left] {$b$} (z0);
\draw[->] (z1) [bend left=15]to node [below,sloped] {$b,c$} (z0);
\end{tikzpicture}}
\end{picture}\hspace*{20mm}%
\raisebox{6.5mm}{$S_2$:}\quad
\begin{picture}(45,18)
\put(0,1.5){%
\begin{tikzpicture}[on grid,>=stealth',initial text={\sf start}
]
\node[state,minimum size=4mm,initial,accepting] (z0) at (0,0) {$z_0$};
\node[state,minimum size=4mm] (z1) at (2,0) {$z_1$};
\draw[->] (z0) [bend right=15]to node [below,sloped] {$a$} (z1);
\draw[->] (z1) [bend right=15]to node [above,sloped] {$a$} (z0);
\end{tikzpicture}}
\end{picture}
\end{center}
Hence, $L\in\cIC(\REG_2^Z)$.

Now we prove that $L$ cannot be generated by a contextual grammar where all selection languages are
power-separating.
Assume the contrary. Then there is a contextual grammar $G'=(V,\cS,A)$ which also generates the language $L$ 
and all selection languages belong to the class $\PS$. For each selection language~$S$ occurring in $\cS$, 
there exists a natural number $m_S\geq 1$
such that for all words $x\in V^*$ it holds 
\[\mbox{either } J_x^{m_S}\cap S=\emptyset \mbox{ or } J_x^{m_S}\subseteq S \mbox{ with } J_x^{m_S}=\set{x^n}{n\geq {m_S}}.\]
Since $\cS$ is finite, there is also a natural number $m\geq 1$ such that 
\[\mbox{either } J_x^m\cap S=\emptyset\mbox{ or } J_x^m\subseteq S \mbox{ with } J_x^m=\set{x^n}{n\geq m}\]
holds for every selection language $S$.
Now let $m$ be such a value.

Further, let $k$ be the maximal length of the axioms and contexts plus $m$:
\[k=\max\sets{\max\set{|w|}{w\in A},\max\set{|uv|}{(u,v)\in C, (S,C)\in\cS}}+m.\]

Consider the word $w=da^{2k}eb^{2k}c(ab)^{2k}$ which belongs to the language $L$ but not to the set $A$ of axioms
due to its length.
Therefore, it is derived from another word $w'\in L$ by insertion of a context $(u,v)$ from a selection pair~$(S,C)$.
We now study the possibilities for $u$ and, depending from this, also for $v$. Let $w'_1$, $w'_2$, and $w'_3$ be the
subwords of $w'$ which are separated by the insertion of $(u,v)$: 
\[w' = w'_1w'_2w'_3\Lra w'_1uw'_2vw'_3 =w.
\]

If $u=d$, then $v=e$. This case will be continued later.

If $u=da^n$ for some $n$ with $1\leq n\leq k$, then $v$ contains the letter $e$ and has to bear also $n$ letters $b$ to
be inserted before the letter $c$ but also $n$ letters of $a$ and $b$ to be created after the $c$ which is not possible.

If $u=a^n$ for some number $n$ with $1\leq n\leq k$ and $w'_1=da^p$ for some $p$ with $0\leq p\leq 2k-n$, 
then $v$ has to bear also $n$ letters $b$ to be inserted before the letter $c$ and also $n$ letters of $a$ and $b$ 
to be created after the $c$ which is not possible.

It is not possible that $u$ contains the letter $e$ because $d$ and $e$ are inserted at the same time but $d$ cannot be present 
in $u$ together with $e$ due to the length of $u$. 

If $w'_1$ starts with $da^{2k}e$ (if $u$ as a subword of $w$ starts after the letter $e$), 
then the word $w'$ does not have the correct form (does not belong to the language $L$ which is a contradiction), 
since the number of letters~$a$ before $c$ is already $2k$ whereas the number of occurrences of $b$ before $c$ or 
the number of occurrences of~$ab$ after $c$ is less (since $|uv|>0$).

Thus, the only possibility is that $(u,v)=(d,e)$ and $w'_2=a^{2k}$. We have $2k>m$ and, therefore, $a^{2k}\in J_a^m$.
Hence, $J_a^m\cap S\not=\emptyset$ and $J_a^m\subseteq S$. Therefore, the word $a^{2k+1}$ (which belongs to the
set $J_a^m$) also belongs to the selection language $S$. The language $L$ also contains the 
word $a^{2k+1}b^{2k+1}c(ab)^{2k+1}$. With the same selection pair $(S,C)$, the word $da^{2k+1}eb^{2k+1}c(ab)^{2k+1}$
could be derived. But this does not belong to the language $L$.
This contradiction shows that our assumption was wrong and that $L\notin\cIC(\PS)$ holds.
\end{proof}
\begin{lemma}\label{ic-comp-l2}
Let $L=\Set{c^nac^mbc^{n+m}}{n\geq 0,m\geq 0}\cup\Set{c^nbc^na}{n\geq 0}$.
Then the relation
\[L\in(\cIC(\RL^V_1)\cap\cIC(\RL^P_2)\cap\cIC(\REG^Z_2))\setminus(\cIC(\CIRC)\cup\cIC(\SUF))\]
holds.
\end{lemma}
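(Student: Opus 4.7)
For membership in $\cIC(\RL_{1}^{V})\cap\cIC(\RL_{2}^{P})$, the plan is to use the contextual grammar $G=(\{a,b,c\},\{(S_{1},\{(c,c)\}),(S_{2},\{(c,c)\})\},\{ab,ba\})$ with two ``growth'' selection pairs. The first pair has selector $S_{1}=\{c\}^{*}\{b\}$ (given by $S\ra cS$ and $S\ra b$, one non-terminal and two rules) and, firing on a subword $c^{i}b$, drives both the Part A step $c^{n}ac^{m}bc^{n+m}\Lra c^{n}ac^{m+1}bc^{n+m+1}$ and the Part B step $c^{n}bc^{n}a\Lra c^{n+1}bc^{n+1}a$. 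The second pair has selector $S_{2}=\{c\}^{*}\{ab\}$ (given by $S\ra cS$ and $S\ra ab$) and, firing on a subword $c^{i}ab$ which exists only in Part A words with $m=0$, drives $c^{n}abc^{n}\Lra c^{n+1}abc^{n+1}$. Correctness is an induction on $n+m$ for Part A and on $n$ for Part B, together with a check that the possible insertion positions on a word in $L$ always produce another word in $L$.

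For membership in $\cIC(\REG_{2}^{Z})$ the above selectors are ruled out because each of $\{c\}^{*}\{b\}$ and $\{c\}^{*}\{ab\}$ has a three-state minimal DFA. I would therefore keep axioms and context unchanged but replace both selection pairs by a single pair whose selector is described by the two-state DFA over $U=\{a,b\}$ defined by $\delta(z_{0},a)=z_{0}$, $\delta(z_{0},b)=z_{1}$, $\delta(z_{1},a)=z_{0}$, $\delta(z_{1},b)=z_{0}$, and $F=\{z_{1}\}$. Since every word of $L$ has exactly one $a$ and one $b$, the only subwords over $\{a,b\}$ of a word in $L$ are from $\{\lambda,a,b,ab,ba\}$; on these five strings the DFA accepts precisely $b$ and $ab$, so the selector triggers exactly the two growth steps of the previous grammar and the same correctness proof applies.

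For $L\notin\cIC(\SUF)$ I would exploit that any non-empty regular suffix-closed language contains $\lambda$, since $\lambda$ is a suffix of every word. Assume $L=L(G)$ with all selectors in $\SUF$. If some selection pair has non-empty $S$, then $\lambda\in S$, and with $x_{2}=\lambda$ one can insert any context $(u,v)\in C$ at any position. Since $ab\in L$ has length $2$ and $|uv|\geq 1$, the word $ab$ cannot be derived inside $G$ from a shorter word of $L$, so $ab$ must already be an axiom; inserting $uv$ in front of it yields $uv\cdot ab$, which is neither in Part A (the only Part A word ending in $ab$ is $ab$ itself) nor in Part B (which ends in $a$, not $b$), contradicting $L(G)=L$. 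Hence every selector must be empty, so $L(G)$ equals the finite axiom set, contradicting the infinitude of $L$.

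For $L\notin\cIC(\CIRC)$ I would fix $N$ larger than every axiom length, take the Part A word $w=c^{N}abc^{N}\in L$, and analyse its last derivation step $w'\Lra w$. Because every word of $L$ contains exactly one $a$ and one $b$, the context $(u,v)$ must lie in $\{c\}^{*}\times\{c\}^{*}$; a case distinction according to whether $x_{2}$ contains neither, only $a$, only $b$, or both of $a,b$ forces $w'=c^{n'}abc^{n'}$ with $n'<N$, $x_{2}=c^{i}abc^{j}$ for some $0\leq i,j\leq n'$, and $(u,v)=(c^{p},c^{p})$ with $p=N-n'\geq 1$. By circularity of the selector $S$, the rotation $bc^{i+j}a$ of $x_{2}$ also lies in $S$. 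This rotation is a contiguous subword of the Part B word $c^{i+j}bc^{i+j}a\in L\subseteq L(G)$, and applying the same context $(c^{p},c^{p})$ around this occurrence produces $c^{i+j+p}bc^{i+j}ac^{p}$, which ends in $c$ and therefore belongs to neither Part A nor Part B of $L$, contradicting $L(G)=L$. The main obstacle in the whole proof is precisely this case analysis: the three ``degenerate'' shapes of $x_{2}$ must be ruled out (each collapses to $p+q=0$), and one then has to notice that the surviving case automatically balances the exponents so that $m'=0$ and $p=q$, which is exactly what makes the circular rotation $bc^{i+j}a$ usable against the Part B witness word.
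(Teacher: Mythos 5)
Your proof is correct, but it follows a genuinely different route from the paper's in all four parts. For the positive claims the paper uses a single selection pair with the \emph{finite} selector $\sets{ab,b}$ (one non-terminal, two rules) to place $L$ in $\cIC(\RL^V_1)\cap\cIC(\RL^P_2)$, and then the \emph{combinational} selector $V^*\sets{b}$ over the full alphabet, accepted by a two-state automaton with no trap state, to place $L$ in $\cIC(\REG^Z_2)$; your two infinite selectors $\sets{c}^*\sets{b}$ and $\sets{c}^*\sets{ab}$ and your hand-built two-state automaton over $\sets{a,b}$ do the same job, but the paper's choices are more economical and reuse standard facts about finite and combinational languages (the combinational grammar also feeds Lemma~\ref{ic-comp-comb-Z2}). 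For the negative claims the paper merely cites \cite[Lemma~18]{DasManTru12b}, whereas you reprove both non-memberships from scratch: the observation that every non-empty suffix-closed language contains $\lambda$ (so any usable context can be prepended to $ab$, yielding $uvab\notin L$ since $uvab$ ends in $b$ yet is longer than $ab$), and the rotation argument for circularity (the last step producing $c^Nabc^N$ forces $x_2=c^iabc^j$ and $(u,v)=(c^p,c^p)$ with $p\geq 1$, and the circular shift $bc^{i+j}a$ of $x_2$ is then exploited inside the word $c^{i+j}bc^{i+j}a\in L$ to derive $c^{i+j+p}bc^{i+j}ac^p\notin L$) are both sound; I checked the case analysis and the degenerate shapes of $x_2$ are indeed all ruled out as you claim. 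Two cosmetic remarks: in the suffix-closed part you do not actually need $ab$ to be an axiom, only that $ab\in L(G)$; and you should first dispose of selection pairs whose context set is empty before concluding that some pair with non-empty selector admits a derivation step.
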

\begin{proof}
Let $V=\Sets{a,b,c}$.
The language $L$ is generated by the contextual grammar
\[G=(V,\sets{(\sets{ab,b},\sets{(c,c)})},\sets{ab,ba}).\]
Since the selection language is finite with two words, it can be generated by a right-linear grammar
with one non-terminal symbol and two rules only. Hence, $L\in\cIC(\RL^V_1)\cap\cIC(\RL^P_2)$.

The language $L$ is also generated by the contextual grammar
\[G=(V,\sets{(V^*\sets{b},\sets{(c,c)})},\sets{ab,ba}).\]
with a combinational selection language only. Every combinational language is accepted by a deterministic finite
automaton with two states (see Theorem~\ref{th-subreg-hier} and Figure~\ref{fig-subreg-hier}).
Hence, $L\in\cIC(\REG^Z_2)$.

In \cite[Lemma~18]{DasManTru12b}, it was shown that the language $L$ can neither be generated by a
contextual grammar with circular filters nor by one with suffix-closed filters. Hence, $L\notin\cIC(\CIRC)\cup\cIC(\SUF)$.
\end{proof}
\begin{lemma}\label{ic-comp-l3}
Let $n\geq 1$ be a natural number and let
\[A_n=\sets{a_1,\ldots,a_n},\quad B_n=\sets{b_1,\ldots,b_n},\quad C_n=\sets{c_1,\ldots,c_n},\quad D_n=\sets{d_1,\ldots,d_n},\]
as well as
\begin{align*}
V_n&=A_n\cup B_n\cup C_n\cup D_n,\\
P_n&=\set{(a_i,c_j)}{1\leq i\leq n, 1\leq j\leq n},\\
Q_n&=\set{(b_i,d_j)}{1\leq i\leq n, 1\leq j\leq n},\\
G_n&=(V_n,\sets{(B_n^*,P_n),(C_n^*,Q_n)},\set{a_{i_a}b_{i_b}c_{i_c}d_{i_d}}{1\leq i_x\leq n,\ x\in\sets{a,b,c,d}}),
\end{align*}
and $L_n=L(G_n)$. 
Then the relation
$L_n\in\cIC(\MON)\setminus\cIC(\RL^P_n)$
holds.
\end{lemma}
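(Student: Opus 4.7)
The plan is to prove the two membership claims separately.

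For $L_n \in \cIC(\MON)$, I would simply observe that the grammar $G_n$ itself already witnesses the membership: both of its selection languages $B_n^* = \{b_1,\ldots,b_n\}^*$ and $C_n^* = \{c_1,\ldots,c_n\}^*$ are of the form $U^*$ for a subalphabet $U \subseteq V_n$ and are therefore monoidal with respect to~$U$.

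For $L_n \notin \cIC(\RL_n^P)$, I would argue by contradiction: assume some contextual grammar $G' = (V_n, \cS, A)$ generates $L_n$ with each selection language in $\cS$ coming from a right-linear grammar of at most $n$ production rules. First, I would note that contextual derivations never decrease length and that the length-$4$ words of $L_n$ are exactly the $n^4$ words $a_{i_a} b_{i_b} c_{i_c} d_{i_d}$, so all of these must already belong to $A$. Next, I would analyse the single-step derivations of the length-$6$ witness words $a_i a_i b_1 c_j c_j d_1$ by a case analysis over all contexts $(u,v)$ with $|uv|=2$ and all positions of $u,v$ inside the word; the only consistent derivation turns out to be from the axiom $a_i b_1 c_j d_1$ by inserting the context $(a_i,c_j)$ around a subword in $\{b_1, a_i b_1, b_1 c_j, a_i b_1 c_j\}$. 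Consequently every context $(a_i,c_j)$ must occur in some selection pair of $\cS$, and symmetrically every context $(b_i,d_j)$.

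The decisive step is then to extend the analysis to richer witness families, in particular the words $a_1 a_1 b_{i_1} b_{i_2} \cdots b_{i_k} c_1 c_1 d_{j_k} \cdots d_{j_1} d_1 \in L_n$ (obtained in $G_n$ by first applying $(k{-}1)$ insertions of $(b_{i_l},d_{j_l})$-contexts around $c_1$ and then one $(a_1,c_1)$-insertion around the resulting $b$-block), and to use a pigeonhole over the finitely many selection pairs of $\cS$ in order to pin down a single selection language forced to generate arbitrarily many structurally distinct short subwords. The right-linear grammar of that selection language then cannot get by with only $n$ rules, contradicting the assumption.

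The main obstacle is controlling the many alternative derivations that the internal mode permits: the selected subword $x_2$ can be chosen freely in the sentential form and may even be $\lambda$, so a careful combinatorial analysis is needed to rule out every alternative derivation path before the rule-count contradiction becomes inevitable. I would model this step on the argument used for the corresponding result on external contextual grammars in \cite{Tru23-dcfs}, adapted to account for the extra flexibility of internal derivations, in particular the $\lambda$-selected insertions.
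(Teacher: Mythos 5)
Your first half is fine and coincides with the paper: the selection languages $B_n^*$ and $C_n^*$ of $G_n$ are of the form $U^*$ for subalphabets $U\subseteq V_n$, hence monoidal with respect to those subalphabets, so $L_n\in\cIC(\MON)$ is immediate. For the second half the paper does not argue at all; it simply cites Lemma~3.30 of \cite{Tru21-fi} for $L_n\notin\cIC(\RL^P_n)$, so you are attempting to reprove an imported result from scratch.

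That attempted reproof has a genuine gap: the decisive step is announced but never executed. Your preliminary observations are sound as far as they go (contexts satisfy $uv\neq\lambda$, so all words of $L_n$ have even length at least $4$ and the length-$4$ words must be axioms; the case analysis on the length-$6$ words $a_ia_ib_1c_jc_jd_1$ does force every context $(a_i,c_j)$, and symmetrically every $(b_i,d_j)$, to occur in some selection pair of $G'$). But this alone proves nothing, because the restriction defining $\RL^P_n$ bounds only the selection \emph{languages}, not the number of contexts in a selection pair, so $\cS$ may happily contain all $2n^2$ contexts attached to a single cheap selection language. Everything then hinges on the claim that some one selection language of $G'$ is forced to contain a family of words that no right-linear grammar with at most $n$ production rules can generate --- and at exactly this point you switch from proving to describing: which words that selection language must contain is never pinned down, the pigeonhole over the finitely many selection pairs is never carried out, the closure of the selection language under the consequences of that pigeonhole is never confronted with the words of $L_n$ it would then wrongly allow (or with a counting bound on $n$-rule grammars), and the alternative $\lambda$-selected insertions that you yourself flag as the main obstacle are never ruled out for the long witness words. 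Deferring to ``the argument for the external case in \cite{Tru23-dcfs}, adapted'' is not a substitute, since the adaptation to internal derivations is precisely the non-trivial content. As it stands, the non-membership half of your proposal is a plan, not a proof; either carry out the combinatorial core or, as the paper does, cite \cite[Lemma~3.30]{Tru21-fi}.
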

\begin{proof}
Let $n\geq 1$. The selection languages of $G_n$ are monoidal. Thus, $L_n\in\cIC(\MON)$.\\
From \cite[Lemma~3.30]{Tru21-fi}, we know that $L\notin\cIC(\RL^P_n)$.
%
\end{proof}

\begin{lemma}\label{ic-comp-l4}
Let $V=\sets{a,b}$ and
$L_n=\set{a^{p_0}ba^{p_1}ba^{p_2}b\cdots a^{p_n}ba^{p_0}ba^{p_1}ba^{p_2}b\cdots a^{p_n}}{p_i\geq 1,\ 0\leq i\leq n}$
for~$n\geq 1$. Then
\[L_n\in(\cIC(\COMM)\cap\cIC(\ORD))\setminus\cIC(\RL^V_n).\]
\end{lemma}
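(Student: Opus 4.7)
The plan is to settle $L_n \in \cIC(\COMM) \cap \cIC(\ORD)$ by an explicit construction and $L_n \notin \cIC(\RL_n^V)$ by a pumping argument that breaks the $L_n$-structure. For the positive direction I would take
\[G = (\sets{a,b},\sets{(S,\sets{(a,a)})},\sets{w_0}),\]
where $w_0 = (ab)^{n+1}(ab)^n a$ is the element of $L_n$ with $p_0 = \cdots = p_n = 1$ and $S = \set{w \in \sets{a,b}^*}{|w|_b = n+1}$. The language $S$ is commutative because $|w|_b$ is a permutation invariant, and ordered via the $b$-counting deterministic automaton with states $q_0, q_1, \ldots, q_{n+1}, q_{\text{dead}}$, initial state $q_0$, accepting state $q_{n+1}$, and transitions $\delta(q_j,a) = q_j$, $\delta(q_j,b) = q_{j+1}$ for $0 \leq j \leq n$, $\delta(q_{n+1},b) = q_{\text{dead}}$, which is monotone under the linear order $q_0 \prec q_1 \prec \cdots \prec q_{n+1} \prec q_{\text{dead}}$. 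To see $L(G) = L_n$, the key structural observation is that if $x \in L_n$ is factorised as $x_1 x_2 x_3$ with $|x_2|_b = n+1$, then the $n+1$ occurrences of $b$ inside $x_2$ coincide with $b_j, b_{j+1}, \ldots, b_{j+n}$ of the canonical factorisation of $x$ for some unique $j \in \sets{0, \ldots, n}$; hence $x_2$ begins inside or at the boundary of the $j$-th $a$-block of the first half of $x$ and ends inside or at the boundary of the $j$-th $a$-block of the second half, so inserting $(a,a)$ extends exactly those two paired blocks by one each, producing a word of $L_n$ with $p_j$ increased by one. Induction on derivation length gives $L(G) \subseteq L_n$, and iterating such single-coordinate pumpings from $w_0$ reaches every element of $L_n$.

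For the non-membership, assume $L_n = L(G')$ for some $G' = (V, \cS, A')$ whose selection languages all lie in $\RL_n^V$. Because every word of $L_n$ contains exactly $2n+1$ occurrences of $b$, any derivation step $w' \Lra w$ within $L_n$ uses a context $(u,v) \in \sets{a}^* \times \sets{a}^*$; matching the $n+1$ coordinates across the two halves then forces $u = v = a^k$ for some $k \geq 1$ and identifies the step as the symmetric pumping of a single coordinate $p_i$ by $k$. The selected subword $x_2$ therefore spans from within the $i$-th $a$-block of the first half of $w'$ to within the $i$-th $a$-block of the second half and satisfies $|x_2|_b = n+1$; consequently at least one selection language $S$ of $G'$ must contain such an $x_2$ for every $i \in \sets{0, \ldots, n}$ and unboundedly many choices of the $p_j$ with $j \neq i$. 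Applying a pumping argument to a fixed right-linear grammar for $S$ along the derivation of a long $x_2$, and exploiting that only $n$ non-terminals are available to produce the $n+1$ occurrences of $b$, I would extract a pumpable factor $v$ of $x_2$ that straddles at least one $b$. Pumping $v$ down once yields $x_2' \in S$ with strictly fewer than $n+1$ occurrences of $b$, and one locates $x_2'$ as a subword of some $w^\sharp \in L_n$ entirely inside one of the two halves. The derivation step of $G'$ inserting $(a^k, a^k)$ around $x_2'$ in $w^\sharp$ then extends two $a$-blocks both lying in the same half of $w^\sharp$ while the other half is untouched; the resulting word lies in $L(G')$ but not in $L_n$, contradicting $L(G') = L_n$.

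The delicate step is forcing the pumpable factor $v$ to straddle an occurrence of $b$: a naive application of the pumping lemma might instead extract a factor lying entirely inside one of the $a$-runs of $x_2$, in which case the argument, after a symmetric adjustment of the second-half coordinates of $w^\sharp$, would merely produce another element of $L_n$. The argument therefore has to exploit the right-linear structure of $S \in \RL_n^V$ carefully: the $n+1$ $b$-producing rule applications in any derivation of $x_2$ cannot all use distinct non-terminals when only $n$ are available, so at least two of them share their non-terminal context and the factor of $x_2$ corresponding to the loop between them must contain an occurrence of $b$. This mirrors the technique of Lemma~3.30 of \cite{Tru21-fi} underlying the proof of Lemma~\ref{ic-comp-l3} above.
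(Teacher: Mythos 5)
Your positive half coincides with the paper's: the grammar, the axiom $(ab)^{2n+1}a$, and the selection language (your $\set{w}{|w|_b=n+1}$ is exactly the paper's $(\sets{a}^*\sets{b}\sets{a}^*)^{n+1}$) are identical, as is the $b$-counting ordered automaton; you additionally spell out the verification $L(G)=L_n$, which the paper omits and which your block-matching observation handles correctly. The genuine difference is the negative half: the paper simply cites \cite[Lemma~3.29]{Tru21-fi} for $L_n\notin\cIC(\RL^V_n)$, whereas you reconstruct a direct proof. Your reconstruction has the right skeleton (forcing $u=v=a^k$ and $|x_2|_b=n+1$, pigeonholing the $b$-producing rule applications against the $n$ non-terminals to obtain a loop containing a $b$, pumping down, and re-embedding the result inside a single half to break the cross-half pairing), and you correctly identify the delicate point that an $a$-only loop proves nothing. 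Two details still need to be nailed down, both fixable by working with selection words whose inner $a$-runs all have length $N$ for $N$ larger than any right-hand side of the fixed grammar (such words must occur in some selection language, since deriving a word of $L_n$ with all coordinates equal to $N$ forces a final step whose selected subword contains $n$ full blocks of length about $N$). First, your pigeonhole presupposes $n+1$ \emph{distinct} $b$-producing rule applications; a single rule could a priori emit two $b$'s, and this is excluded only because one application produces a contiguous factor of bounded length while consecutive $b$'s in $x_2$ are separated by $a^N$. Second, "one locates $x_2'$ as a subword of some $w^\sharp\in L_n$ entirely inside one half" requires checking that the pumped-down word has no $bb$ factor (the run merged by deleting the loop could in principle be empty) and that its at most $n$ inner runs can be matched by blocks of a word of $L_n$; again the choice of large $N$ guarantees the merged run is nonempty and the matching is possible. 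With these two points made explicit, your direct argument is a valid, self-contained replacement for the citation.
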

\begin{proof}
Let $n$ be a natural number with $n\geq 1$.

The language $L_n$ is generated by the contextual grammar
\[G_n=(V,\sets{(S_n,\sets{(a,a)})},\sets{(ab)^{2n+1}a})\]
with the selection language $S_n=(\sets{a}^*\sets{b}\sets{a}^*)^{n+1}$.
This selection language is commutative; hence, we have $L_n\in\cIC(\COMM)$.

The selection language is accepted by an automaton whose transition function is shown in the following diagram:
\begin{center}
\begin{tikzpicture}[on grid,>=stealth',initial text={\sf start}
]
\node[state,minimum size=9mm,initial] (z0) at (0,0) {$z_0$};
\node[state,minimum size=9mm] (z1) at (2,0) {$z_1$};
\node (z2) at (4,0) {$\cdots$};
\node[state,minimum size=9mm,accepting] (z3) at (6,0) {$z_{n+1}$};
\node[state,minimum size=9mm] (z4) at (8,0) {$z_{n+2}$};
\draw[->] (z0) .. controls +(75:1) and +(105:1) .. node [left] {$a$} (z0);
\draw[->] (z0) edge node [above,sloped] {$b$} (z1);
\draw[->] (z1) .. controls +(75:1) and +(105:1) .. node [left] {$a$} (z1);
\draw[->] (z1) edge node [above,sloped] {$b$} (z2);
\draw[->] (z2) edge node [above,sloped] {$b$} (z3);
\draw[->] (z3) .. controls +(75:1) and +(105:1) .. node [left] {$a$} (z3);
\draw[->] (z3) edge node [above,sloped] {$b$} (z4);
\draw[->] (z4) .. controls +(345:1) and +(15:1) .. node [right] {$a,b$} (z4);
\end{tikzpicture}
\end{center}

This shows that the automaton is ordered (with $z_0\prec z_1\prec \cdots \prec z_{n+2}$, it holds
$\delta(z_i,x)\preceq\delta(z_j,x)$ for any two states $z_i$ and $z_j$ with $z_i\prec z_j$ and any $x\in\{a,b\}$).
Hence, $L_n\in\cIC(\ORD)$.\\
In \cite[Lemma~3.29]{Tru21-fi}, the relation $L_n\notin\cIC(\RL^V_n)$ was proved.
\end{proof}


\begin{lemma}\label{ic-comp-l6}
Let $n\geq 2$ be a natural number, $V_n=\{a_1,a_2,\ldots,a_n\}$ be an alphabet, and $L_n$ be the 
language~$L_n=\sets{a_1a_2\ldots a_n}^+\cup V_n^{n-1}$.
Then the relation
$L_n\in\cIC(\FIN)\setminus\cIC(\REG^Z_n)$
holds.
\end{lemma}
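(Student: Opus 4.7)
The plan is twofold. For $L_n \in \cIC(\FIN)$, I exhibit the contextual grammar over $V_n$ with axioms $A = V_n^{n-1} \cup \{a_1 a_2 \cdots a_n\}$ and the single selection pair $(\{a_1 a_2 \cdots a_n\}, \{(a_1 a_2 \cdots a_n, \lambda)\})$. The selection language is finite; it cannot match any factor of a length-$(n-1)$ axiom, and since $a_1,\ldots,a_n$ are pairwise distinct the word $a_1 a_2 \cdots a_n$ appears as a factor of $(a_1 \cdots a_n)^k$ exactly at positions that are multiples of~$n$, around any of which inserting the context yields $(a_1 \cdots a_n)^{k+1}$. Hence the grammar generates exactly $L_n$.

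For $L_n \notin \cIC(\REG^Z_n)$, I argue by contradiction. Suppose $L_n = L(G)$ for some $G = (V, \cS, A)$ whose selection languages are all accepted by DFAs with at most~$n$ states. Because $L_n$ has no word shorter than~$n-1$ and every derivation step strictly increases length, each word of $V_n^{n-1}$ must already be an axiom, so $V_n^{n-1} \subseteq A$. The key claim is that every selection language $S$ appearing in $\cS$ satisfies $S \cap V_n^* = \emptyset$; once this is shown, no derivation step is ever applicable (any candidate subword $w_2$ is a factor of a sentential form, which lies in $V_n^*$), whence $L(G) = A$ is finite, contradicting the infinitude of $L_n$.

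The disjointness is established in two stages. First, $S \cap V_n^{\leq n-1} = \emptyset$: if some $w_2 \in S$ has length $k$ with $0 \leq k \leq n-2$, pick two distinct continuations $z_3, z_3' \in V_n^{n-1-k}$; the axioms $w_2 z_3$ and $w_2 z_3'$ lie in $V_n^{n-1}$, and applying any $(u,v) \in C$ produces the distinct words $u w_2 v z_3 \neq u w_2 v z_3'$ of the same length $n-1+|uv| \geq n$, contradicting the fact that $L_n$ has at most one word of each length $\geq n$ (namely the unique $(a_1 \cdots a_n)^j$ of length $jn$). The case $k = n-1$ is the delicate one: $w_2$ is its own unique containing axiom, applying $(u,v)$ to it forces $u w_2 v = (a_1 \cdots a_n)^j$ and hence $|uv| \equiv 1 \pmod n$; but then $w_2$ is a factor of $(a_1 \cdots a_n)^j$, so the same rule triggers inside this word and produces one of length $jn+|uv|$ that must again be a power $(a_1 \cdots a_n)^{j'}$, forcing $|uv| \equiv 0 \pmod n$; the two congruences are incompatible for $n \geq 2$. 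Second, assuming $S \cap V_n^{\leq n-1} = \emptyset$, let $w$ be a word of minimum length in $S \cap V_n^*$ (if any such word exists); then $|w| \geq n$, so the DFA traverses a sequence of $|w|+1 \geq n+1$ states out of at most $n$ distinct ones, and by pigeonhole some state repeats, allowing the induced loop to be excised to yield a strictly shorter word in $S \cap V_n^*$, contradicting minimality. The main obstacle is the mod-$n$ incompatibility in the $k = n-1$ sub-case; the remaining arguments are routine counting and pigeonhole.
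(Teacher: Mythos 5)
Your proof is correct. The membership half is essentially the paper's own construction (the paper uses the context $(\lambda,a_1a_2\ldots a_n)$ where you use $(a_1a_2\ldots a_n,\lambda)$, which is immaterial). The substantive difference is in the non-membership half: the paper simply cites \cite[Lemma~3.31]{Tru21-fi} for $L_n\notin\cIC(\REG^Z_n)$, whereas you supply a complete self-contained argument, and that argument is sound. Since every word of $L_n$ has length at least $n-1$ and each derivation step strictly increases length, indeed $V_n^{n-1}\subseteq A$; your two-continuations argument for $|w_2|\leq n-2$ correctly exploits that $L_n$ contains at most one word of any given length $\geq n$; the congruence clash $|uv|\equiv 1\pmod{n}$ (from inserting around a length-$(n-1)$ axiom) versus $|uv|\equiv 0\pmod{n}$ (from re-applying the same pair inside the resulting power of $a_1\cdots a_n$) is valid precisely because $n\geq 2$; and the pigeonhole/loop-excision step legitimately produces a strictly shorter word of $S\cap V_n^*$, since deleting a factor cannot introduce letters outside $V_n$. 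Two points are worth stating explicitly when you write this up: the Stage-1 contradictions require choosing some $(u,v)\in C$, so selection pairs with $C=\emptyset$ should be discarded first (they never enable a derivation anyway); and the final conclusion $L(G)=A$ uses that every sentential form lies in $L(G)=L_n\subseteq V_n^*$, so every candidate selected factor is in $V_n^*$. Whether your argument coincides with the one in \cite{Tru21-fi} cannot be judged from this paper alone, but as written it is a valid proof of the statement.
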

\begin{proof}
Let $n\geq 2$.
The language $L_n$ is generated by the contextual grammar
\[G_n=(V_n,\sets{(\{a_1a_2\ldots a_n\},\sets{(\lambda,a_1a_2\ldots a_n)})},V_n^{n-1}\cup\sets{a_1a_2\ldots a_n})\]
with a finite selection language only. Thus, $L_n\in\cIC(\FIN)$.\\
In \cite[Lemma~3.31]{Tru21-fi}, it was shown that $L_n\notin\cIC(\REG^Z_n)$.
\end{proof}

In a similar way, the following result is proved.

\begin{lemma}\label{ic-comp-l7}
Let $n\geq 2$ be a natural number, $V_n=\{a_1,a_2,\ldots,a_n\}$ be an alphabet, and $L_n$ be the language
\[L_n=V_n^{{}\leq n-1}\cup\bigcup_{k\geq 1}V_n^{kn}.\]
Then the relation
$L_n\in\cIC(\COMM)\setminus\cIC(\REG^Z_n)$
holds.
\end{lemma}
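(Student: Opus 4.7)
My plan is to mirror the two-step structure of the proof of Lemma~\ref{ic-comp-l6}: first exhibit a contextual grammar with a commutative selection language that generates $L_n$, and then appeal to a pigeonhole argument on the states of the hypothetical selection DFAs to rule out membership in $\cIC(\REG^Z_n)$.

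For the positive part, I would take
\[G_n=(V_n,\sets{(V_n^n,C_n)},V_n^{\leq n-1}\cup V_n^n)\]
with $C_n=\set{(u,v)\in V_n^*\times V_n^*}{|u|+|v|=n}$. The only selection language is $V_n^n$, which is closed under permutations of letters (every permutation of a length-$n$ word over $V_n$ is again in $V_n^n$), so it lies in $\COMM$. Correctness of $G_n$ rests on two observations. An axiom of length less than $n$ has no subword of length $n$, so it admits no derivation step and contributes only itself; every other derivation step wraps a context of total length $n$ around a subword of length $n$, thus sending a word of length $kn$ to a word of length $(k+1)n$. Conversely, an arbitrary target $y=y_1y_2\cdots y_{kn}\in V_n^{kn}$ with $k\geq 2$ is reached from the predecessor $z=y_1\cdots y_n\,y_{2n+1}\cdots y_{kn}\in V_n^{(k-1)n}$ by selecting $x_2=y_1\cdots y_n\in V_n^n$ and inserting $(\lambda,y_{n+1}\cdots y_{2n})\in C_n$; induction on $k$ then yields $L_n\subseteq L(G_n)$. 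Hence $L_n=L(G_n)\in\cIC(\COMM)$.

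For the negative part, I would invoke the same style of argument as in Lemma~\ref{ic-comp-l6}, whose non-membership is supplied by \cite[Lemma~3.31]{Tru21-fi}. The heart of the argument is a pigeonhole-on-states analysis of the last derivation step of a very long word $w\in L_n$: if every selection DFA has only $n$ states, any sufficiently long selected subword $x_2\in S$ admits a short pumpable factor, and iterating the derivation with the pumped $x_2$ produces a derivable word whose length lies strictly between two consecutive multiples of $n$ --- which is forbidden in $L_n$ since its valid lengths beyond $n-1$ are exactly the positive multiples of $n$. The main obstacle is the degenerate case in which every available cycle in the DFA has length exactly $n$; ruling this out uses the interplay between the $n$-state bound and the finite support of axioms and contexts, and this is the technical content packaged in the cited lemma.
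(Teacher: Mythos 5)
Your positive half is correct and essentially the paper's own: the paper uses the same selection language $V_n^n$ and the same axiom set, only with the smaller context set $\set{(\lambda,w)}{w\in V_n^n}$, and your verification that $L(G_n)=L_n$ is sound.

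The negative half, however, has a genuine gap. First, \cite[Lemma~3.31]{Tru21-fi} concerns the language $\sets{a_1a_2\ldots a_n}^+\cup V_n^{n-1}$ of Lemma~\ref{ic-comp-l6}, not the present $L_n$, so it cannot simply be imported; the paper gives a separate (short) argument here. Second, the mechanism you sketch does not go through. Pumping a factor of a selected subword $x_2\in S$ produces another word of $S$, but not automatically a new derivation step: the pumped word would have to occur as a subword of some derivable word before you may ``iterate the derivation with it''. Moreover, the situation you flag as the ``main obstacle'' --- every usable cycle of the DFA having length a multiple of $n$ --- is not a degenerate corner case but precisely what happens for natural candidates such as $\bigcup_{k\geq 1}V_n^{kn}$, whose minimal DFA has $n+1$ states and only cycles of length $n$; any argument that tries to perturb the selected length modulo $n$ by pumping therefore cannot succeed, and you explicitly leave this case unresolved.

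The missing idea is the role of the component $V_n^{\leq n-1}$ of $L_n$, which your sketch never uses. The paper argues: (i) since an internally insertable context can be inserted arbitrarily often around the same selected subword, every usable context has length divisible by $n$; (ii) every word of length less than $n$ over $V_n$ is itself a word of $L_n$, so if a selection language contained such a word $x_2$, wrapping its context around $x_2$ inside $x_2$ itself (or, for $x_2=\lambda$, inside $a_1$) would yield a word whose length is not in $\sets{0,\ldots,n-1}\cup\set{kn}{k\geq 1}$; hence every selection language that is actually used contains only words of length at least $n$; (iii) a nonempty language accepted by a DFA with $n$ states always contains a word of length less than $n$ (take a shortest accepted word and excise a repeated state from its run --- this pumps \emph{down} rather than perturbing lengths modulo $n$). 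Combining (ii) and (iii) gives $L_n\notin\cIC(\REG^Z_n)$. Without step (ii) and the downward pigeonhole of step (iii), your argument does not close.
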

\begin{proof}
Let $n\geq 2$.
The language $L_n$ is generated by the contextual grammar
\[G_n=(V_n,\sets{(V_n^n,\set{(\lambda,w)}{w\in V_n^n})},V_n^{{}\leq n-1}\cup V_n^n)\]
with a commutative selection language only. Thus, $L_n\in\cIC(\COMM)$.

In any contextual grammar generating the language $L_n$, every context has a length which is divisible
by $n$ and can only be added to subwords of words of the language which have a length of at least $n$. Since
every subword of length less than $n$ occurs in the language, the selected subwords must have a length
of at least $n$. This cannot be checked by a deterministic finite automaton with $n$ states only.
\end{proof}


We now prove the relations between the language families of contextual grammars where the 
selection languages are taken from subregular families of languages which have common 
structural properties and from families of regular languages defined by restricting the 
resources needed for generating or accepting them.
%
We start with families which are defined by the number of production rules necessary for 
generating the selection languages.


\begin{lemma}\label{ic-l-P1-FIN}
The language families $\cIC(\RL^P_1)$ and $\cIC(\FIN)$ coincide.
\end{lemma}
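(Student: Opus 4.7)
The plan is to prove both inclusions, one by a direct appeal to monotonicity and the other by a simple construction that splits a finite selection language into its individual words.

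For the easier inclusion $\cIC(\RL^P_1)\subseteq\cIC(\FIN)$, I would first observe what languages live in $\RL^P_1$: a right-linear grammar with at most one production rule either terminates immediately (a rule of the form $S\to w$ with $w\in T^*$, generating $\sets{w}$), or it has no terminating rule at all, generating $\emptyset$. Thus $\RL^P_1\subseteq\FIN$, and Lemma~\ref{l-context-gramm-monoton} gives $\cIC(\RL^P_1)\subseteq\cIC(\FIN)$.

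For the converse $\cIC(\FIN)\subseteq\cIC(\RL^P_1)$, I would take an arbitrary internal contextual grammar $G=(V,\cS,A)$ with all selection languages finite and construct an equivalent grammar $G'=(V,\cS',A)$ in which every selection language is a singleton (hence lies in $\RL^P_1$). The construction is to replace each selection pair $(S,C)\in\cS$, where $S=\sets{s_1,s_2,\ldots,s_k}$, by the $k$ selection pairs $(\sets{s_1},C),(\sets{s_2},C),\ldots,(\sets{s_k},C)$ (pairs with $S=\emptyset$ can simply be dropped, as they are never applicable). Each singleton $\sets{s_i}$ is generated by the single rule $T\to s_i$ of a right-linear grammar over the subalphabet of letters actually occurring in $s_i$, so $\sets{s_i}\in\RL^P_1$.

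The correctness of the construction is immediate from the definition of the direct internal derivation step: a context $(u,v)\in C$ may be wrapped around a subword $x_2$ in $G$ via the pair $(S,C)$ iff $x_2\in S$, which holds iff $x_2=s_i$ for some $i$, which in turn holds iff the pair $(\sets{s_i},C)$ of $G'$ permits the same insertion. Hence $\Lra_G$ and $\Lra_{G'}$ coincide, the axiom set is unchanged, and therefore $L(G)=L(G')\in\cIC(\RL^P_1)$. Combining the two inclusions yields the claimed equality $\cIC(\RL^P_1)=\cIC(\FIN)$. No step here is a genuine obstacle; the only small subtlety is to be careful with the subalphabet convention for selection languages, which is handled by noting that each singleton $\sets{s_i}$ is trivially in $\RL^P_1$ with respect to any subalphabet of $V$ containing its letters.
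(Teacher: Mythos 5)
Your proof is correct and follows essentially the same route as the paper: the inclusion $\cIC(\RL^P_1)\subseteq\cIC(\FIN)$ via Lemma~\ref{l-context-gramm-monoton} and the inclusion $\RL^P_1\subseteq\FIN$, and the converse by splitting each finite selection language into singleton selection pairs with the same context set. The extra details you supply (the explicit description of $\RL^P_1$, dropping empty selection languages, and the subalphabet remark) are all sound but not needed beyond what the paper records.
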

\begin{proof}
The inclusion $\cIC(\RL^P_1)\subseteq\cIC(\FIN)$ follows by Lemma~\ref{l-context-gramm-monoton} from 
the inclusion $\RL^P_1\subseteq\FIN$ (see Theorem~\ref{th-subreg-hier} and also Figure~\ref{fig-subreg-hier}).

For the converse inclusion,
let $m\geq 1$ and 
\[G=(V,\set{(S_i,C_i)}{1\leq i\leq m},A)\]
be a contextual grammar where all 
selection languages $S_i$ ($1\leq i\leq m$) are finite. Then we split up the selection languages into singleton sets
and obtain the contextual grammar
\[G'=(V,\set{(\sets{w},C_i)}{1\leq i\leq m,\ w\in S_i},A)\]
which generates the same language as $G$
and all selection languages belong to the family $\RL^P_1$. Hence, also the inclusion $\cIC(\FIN)\subseteq\cIC(\RL^P_1)$
holds and together we obtain $\cIC(\FIN)=\cIC(\RL^P_1)$
\end{proof}

\begin{lemma}\label{ic-incomp-P2-mon-ps-circ}
The language families $\cIC(\RL^P_n)$ for $n\geq 2$ are incomparable to the 
families 
\begin{multline*}
\cIC(\MON),\ \cIC(\NIL),\ \cIC(\COMB),\ \cIC(\DEF),\ \cIC(\ORD),\ 
\cIC(\NC),\ \cIC(\PS),\\ \cIC(\SUF),\ \cIC(\COMM), \text{ and }\cIC(\CIRC).
\end{multline*}
\end{lemma}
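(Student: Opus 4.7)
The plan is to establish both non-inclusions $\cIC(\RL_n^P)\not\subseteq\cIC(X)$ and $\cIC(X)\not\subseteq\cIC(\RL_n^P)$ for each listed $X$ by reusing the witness languages from Lemmas~\ref{ic-comp-l1}, \ref{ic-comp-l2}, and~\ref{ic-comp-l3}, transporting separations by the monotonicity statement (Lemma~\ref{l-context-gramm-monoton}) along the subregular hierarchy (Figure~\ref{fig-subreg-hier}) and the hierarchy of internal contextual families (Figure~\ref{ic-fig-sep}). No new grammar construction is required.

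For the direction $\cIC(\RL_n^P)\not\subseteq\cIC(X)$ I would split the list in two. Let $L$ be the language of Lemma~\ref{ic-comp-l1}: it belongs to $\cIC(\RL_2^P)\subseteq\cIC(\RL_n^P)$ (via $\RL_2^P\subseteq\RL_n^P$ and Lemma~\ref{l-context-gramm-monoton}) yet avoids $\cIC(\PS)$. From Figure~\ref{fig-subreg-hier}, each of $\MON$, $\NIL$, $\COMB$, $\DEF$, $\ORD$, $\NC$, and $\SUF$ is contained in $\PS$, so $\cIC(X)\subseteq\cIC(\PS)$ and therefore $L\notin\cIC(X)$ for each such $X$. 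The two remaining families $\COMM$ and $\CIRC$ do not lie below $\PS$; for them I would take the language $L'$ of Lemma~\ref{ic-comp-l2}, which is again in $\cIC(\RL_2^P)\subseteq\cIC(\RL_n^P)$ and avoids $\cIC(\CIRC)$, hence also $\cIC(\COMM)$ because $\cIC(\COMM)\subseteq\cIC(\CIRC)$.

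For the converse direction a single witness suffices. The language $L_n$ of Lemma~\ref{ic-comp-l3} lies in $\cIC(\MON)\setminus\cIC(\RL_n^P)$. Reading the left part of Figure~\ref{ic-fig-sep}, $\cIC(\MON)$ sits below every family in the statement: $\cIC(\MON)\subseteq\cIC(\COMB)\subseteq\cIC(\DEF)\subseteq\cIC(\ORD)\subseteq\cIC(\NC)\subseteq\cIC(\PS)$, $\cIC(\MON)\subseteq\cIC(\NIL)\subseteq\cIC(\DEF)$, $\cIC(\MON)\subseteq\cIC(\SUF)\subseteq\cIC(\PS)$, and $\cIC(\MON)\subseteq\cIC(\COMM)\subseteq\cIC(\CIRC)$. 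Hence $L_n\in\cIC(X)\setminus\cIC(\RL_n^P)$ for every $X$ in the list.

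The main obstacle is essentially none: all the hard combinatorial work (non-membership in $\cIC(\PS)$, in $\cIC(\CIRC)\cup\cIC(\SUF)$, and in $\cIC(\RL_n^P)$) has been done in the preceding witness lemmas, so the present proof reduces to chaining inclusions read off from the two hierarchy pictures. The only point worth checking explicitly is that the witnesses from Lemmas~\ref{ic-comp-l1} and~\ref{ic-comp-l2}, which are stated for $\RL_2^P$, still lie in $\cIC(\RL_n^P)$ for all $n\geq 2$; this is immediate from the inclusion $\RL_2^P\subseteq\RL_n^P$ together with Lemma~\ref{l-context-gramm-monoton}.
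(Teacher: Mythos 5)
Your proposal is correct and follows essentially the same route as the paper: both reduce the claim to the three non-emptiness facts $\cIC(\RL^P_2)\setminus\cIC(\PS)\neq\emptyset$, $\cIC(\RL^P_2)\setminus\cIC(\CIRC)\neq\emptyset$, and $\cIC(\MON)\setminus\cIC(\RL^P_n)\neq\emptyset$, citing Lemmas~\ref{ic-comp-l1}, \ref{ic-comp-l2}, and~\ref{ic-comp-l3} and transporting the separations along the two hierarchies. You merely spell out the inclusion chains (correctly using Figure~\ref{ic-fig-sep} rather than Figure~\ref{fig-subreg-hier} for $\cIC(\MON)\subseteq\cIC(\COMB)$) that the paper leaves implicit.
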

\begin{proof}
Due to the inclusion relations stated in Theorem~\ref{th-subreg-hier}, depicted in
Figure~\ref{fig-subreg-hier}, proofs of the following relations are sufficient:
\begin{enumerate}
\item $\cIC(\RL^P_2)\setminus\cIC(\PS)\not=\emptyset$,
\item $\cIC(\RL^P_2)\setminus\cIC(\CIRC)\not=\emptyset$,
\item $\cIC(\MON)\setminus\cIC(\RL^P_n)\not=\emptyset$ for every natural number $n$ 
with $n\geq 2$.
\end{enumerate}
The first relation was proved in Lemma~\ref{ic-comp-l1}, the second relation in Lemma~\ref{ic-comp-l2},
and the third relation in Lemma~\ref{ic-comp-l3}.
\end{proof}

Regarding the families which are defined by the number of states necessary for accepting
the selection languages, we obtain the following results.

\begin{lemma}\label{ic-comp-mon-Z1}
The language families $\cIC(\MON)$ and $\cIC(\REG^Z_1)$ coincide.
\end{lemma}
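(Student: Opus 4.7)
The plan is to prove the two inclusions separately. The forward inclusion $\cIC(\MON)\subseteq\cIC(\REG^Z_1)$ is immediate from Lemma~\ref{l-context-gramm-monoton} together with the set-theoretic inclusion $\MON\subseteq\REG^Z_1$ read off from Theorem~\ref{th-subreg-hier} and Figure~\ref{fig-subreg-hier}: every monoidal language $U^*$ is accepted by the one-state deterministic finite automaton over $U$ whose single state is both initial and accepting.

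For the converse inclusion $\cIC(\REG^Z_1)\subseteq\cIC(\MON)$, the key structural observation is that a deterministic finite automaton over an alphabet $U$ with only one state accepts either $U^*$ (when that state is accepting) or the empty language (when it is not). Consequently, every selection language in a contextual grammar whose selection languages come from $\REG^Z_1$ is either monoidal or empty.

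Given any such grammar $G=(V,\cS,A)$, I would pass to the grammar $G'=(V,\cS',A)$ obtained by discarding from $\cS$ every selection pair $(S,C)$ with $S=\emptyset$. Such a pair can never contribute to a direct derivation step, because the condition $x_2\in S$ is unsatisfiable, so $L(G')=L(G)$. Every selection language remaining in $\cS'$ is of the form $U^*$ with $U\subseteq V$, hence monoidal with respect to its own alphabet, so $L(G)=L(G')\in\cIC(\MON)$.

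There is no real obstacle here; the only mild subtlety is that $\cS'$ might become empty if every selection language of $G$ was empty to begin with, but the definition of a contextual grammar permits an empty selection set, and in that case $L(G')=A$ is still trivially in $\cIC(\MON)$. Combining both inclusions yields $\cIC(\MON)=\cIC(\REG^Z_1)$.
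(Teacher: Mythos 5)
Your proposal is correct and follows essentially the same route as the paper, which simply observes that $\REG^Z_1=\MON\cup\sets{\emptyset}$ and that an empty selection language never contributes to a derivation; you have merely written out the one-state-automaton case analysis and the removal of empty selection pairs explicitly. Nothing further is needed.
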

\begin{proof}
This follows from the fact that $\REG^Z_1=\MON\cup\sets{\emptyset}$
and that the empty set has no influence as a selection language.
\end{proof}

\begin{lemma}\label{ic-comp-comb-Z2}
The relation
$\cIC(\COMB)\subset\cIC(\REG^Z_2)$
holds.
\end{lemma}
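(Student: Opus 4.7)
The plan is to split the statement into an inclusion part and a separation part and to locate a witness language in the previously constructed examples.

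For the inclusion $\cIC(\COMB)\subseteq\cIC(\REG^Z_2)$, I would invoke Theorem~\ref{th-subreg-hier} (or, equivalently, Figure~\ref{fig-subreg-hier}), where it is stated that every combinational language $V^*X$ is accepted by a deterministic finite automaton with two states (one non-accepting trap-like state that loops on letters outside of $X$, and one accepting state reached exactly after reading a letter from $X$). Hence $\COMB\subseteq\REG^Z_2$, and applying Lemma~\ref{l-context-gramm-monoton} immediately yields $\cIC(\COMB)\subseteq\cIC(\REG^Z_2)$.

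For the properness, I would reuse the witness language $L$ from Lemma~\ref{ic-comp-l1}, which lies in $\cIC(\REG^Z_2)$ but outside $\cIC(\PS)$. The key observation is that the inclusion $\COMB\subseteq\PS$ holds along the chain $\COMB\subset\DEF\subset\ORD\subset\NC\subset\PS$ of Theorem~\ref{th-subreg-hier}, so Lemma~\ref{l-context-gramm-monoton} gives $\cIC(\COMB)\subseteq\cIC(\PS)$. Consequently, $L\notin\cIC(\PS)$ implies $L\notin\cIC(\COMB)$, while $L\in\cIC(\REG^Z_2)$ was already established in Lemma~\ref{ic-comp-l1}. Thus $L$ witnesses $\cIC(\REG^Z_2)\setminus\cIC(\COMB)\neq\emptyset$.

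Putting the two parts together gives the proper inclusion $\cIC(\COMB)\subset\cIC(\REG^Z_2)$. There is no real obstacle here, since all ingredients are already available: the inclusion is a routine application of monotonicity (Lemma~\ref{l-context-gramm-monoton}) to a known subregular inclusion, and the separation is obtained for free by pipelining the non-membership $L\notin\cIC(\PS)$ from Lemma~\ref{ic-comp-l1} through the chain $\COMB\subseteq\PS$.
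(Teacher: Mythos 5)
Your proposal is correct and follows essentially the same route as the paper: the inclusion via $\COMB\subset\REG^Z_2$ and Lemma~\ref{l-context-gramm-monoton}, and the properness via the witness language of Lemma~\ref{ic-comp-l1}. You merely spell out explicitly the step $L\notin\cIC(\PS)\Rightarrow L\notin\cIC(\COMB)$ via $\COMB\subseteq\PS$, which the paper leaves implicit.
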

\begin{proof}
From 
Theorem~\ref{th-subreg-hier} (see Figure~\ref{fig-subreg-hier}), we know that $\COMB\subset\REG^Z_2$. By
Lemma~\ref{l-context-gramm-monoton}, we obtain that~$\cIC(\COMB)\subseteq\cIC(\REG^Z_2)$
holds. 
By Lemma~\ref{ic-comp-l1}, this inclusion is proper.
\end{proof}

\begin{lemma}\label{ic-incomp-Z2-fin-ps}
Every language family $\cIC(\REG^Z_n)$ where $n\geq 2$ is incomparable to each of the 
families
\[\cIC(\FIN),\ \cIC(\NIL),\ \cIC(\DEF),\ \cIC(\ORD),\ \cIC(\NC), \text{ and }\cIC(\PS).\]
\end{lemma}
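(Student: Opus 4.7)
The plan is to verify both non-inclusions for each of the six pairs by reusing witness languages already constructed earlier in the section, exploiting the fact that monotonicity (Lemma~\ref{l-context-gramm-monoton}) collapses the six individual cases into just two separation arguments.

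First, I would establish $\cIC(\REG^Z_n)\not\subseteq\cIC(X)$ for every $X\in\{\FIN,\NIL,\DEF,\ORD,\NC,\PS\}$. The subregular hierarchy (Theorem~\ref{th-subreg-hier} and Figure~\ref{fig-subreg-hier}) places all six of these families inside $\PS$, so by Lemma~\ref{l-context-gramm-monoton} we have $\cIC(X)\subseteq\cIC(\PS)$ in each case. Consequently, a single witness in $\cIC(\REG^Z_n)\setminus\cIC(\PS)$ suffices for all six non-inclusions simultaneously. Lemma~\ref{ic-comp-l1} delivers such a witness, which already lies in $\cIC(\REG^Z_2)$; since $\REG^Z_2\subseteq\REG^Z_n$ for every $n\geq 2$, one more application of Lemma~\ref{l-context-gramm-monoton} places it in $\cIC(\REG^Z_n)$ as required.

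Second, I would establish the opposite direction $\cIC(X)\not\subseteq\cIC(\REG^Z_n)$. Because $\FIN$ is contained in each of $\NIL$, $\DEF$, $\ORD$, $\NC$, and $\PS$ (again by Theorem~\ref{th-subreg-hier}), Lemma~\ref{l-context-gramm-monoton} gives $\cIC(\FIN)\subseteq\cIC(X)$ for all six target families, so a single witness in $\cIC(\FIN)\setminus\cIC(\REG^Z_n)$ handles all six non-inclusions at once. Lemma~\ref{ic-comp-l6} provides exactly such a witness $L_n$ for the very value $n\geq 2$ under consideration.

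There is no real obstacle: the nontrivial technical work, namely constructing the witness languages and proving their non-membership in $\cIC(\PS)$ and $\cIC(\REG^Z_n)$ respectively, has already been carried out in Lemmas~\ref{ic-comp-l1} and~\ref{ic-comp-l6}. I need only be careful in the write-up to note that the witness of Lemma~\ref{ic-comp-l1} transfers from $\cIC(\REG^Z_2)$ to $\cIC(\REG^Z_n)$ for all $n\geq 2$ (immediate from monotonicity and the trivial inclusion of resource families) and that the parameter $n$ appearing in Lemma~\ref{ic-comp-l6} is taken to coincide with the parameter indexing the family $\cIC(\REG^Z_n)$ in the statement. The proof should therefore be quite short, essentially a two-line argument for each direction followed by a citation to the appropriate earlier lemma.
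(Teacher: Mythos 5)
Your proposal is correct and follows essentially the same route as the paper: the paper likewise reduces the twelve non-inclusions to the two facts $\cIC(\REG^Z_2)\setminus\cIC(\PS)\neq\emptyset$ (Lemma~\ref{ic-comp-l1}) and $\cIC(\FIN)\setminus\cIC(\REG^Z_n)\neq\emptyset$ (Lemma~\ref{ic-comp-l6}), invoking the subregular inclusions of Theorem~\ref{th-subreg-hier} together with monotonicity. Your write-up merely spells out the monotonicity reasoning that the paper leaves implicit.
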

\begin{proof}
Due to the inclusion relations stated in Theorem~\ref{th-subreg-hier}, depicted in
Figure~\ref{fig-subreg-hier}, proofs of the following relations are sufficient:
\begin{enumerate}
\item $\cIC(\REG^Z_2)\setminus\cIC(\PS)\not=\emptyset$,
\item $\cIC(\FIN)\setminus\cIC(\REG^Z_n)\not=\emptyset$ for every $n\geq 2$.
\end{enumerate}
The first relation was proved with Lemma~\ref{ic-comp-l1}, the second one with Lemma~\ref{ic-comp-l6}.
\end{proof}

\begin{lemma}\label{ic-incomp-Z2-comm-circ}
Every language family $\cIC(\REG^Z_n)$ where $n\geq 2$ is incomparable to each of the 
fa\-mi\-lies~$\cIC(\COMM)$ and $\cIC(\CIRC)$.
\end{lemma}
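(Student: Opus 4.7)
The plan is to establish the two incomparabilities by exhibiting witness languages in each direction, reusing the lemmas already proved in this section together with the hierarchy from Theorem~\ref{th-subreg-hier}. The key observation enabling economy of witnesses is that $\COMM\subset\CIRC$ (see Figure~\ref{fig-subreg-hier}), which by Lemma~\ref{l-context-gramm-monoton} yields $\cIC(\COMM)\subseteq\cIC(\CIRC)$. Hence a single witness language that avoids $\cIC(\CIRC)$ automatically avoids $\cIC(\COMM)$, and a single witness inside $\cIC(\COMM)$ automatically lies in $\cIC(\CIRC)$.

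For the first direction, namely $\cIC(\REG^Z_n)\setminus\cIC(\CIRC)\neq\emptyset$ and $\cIC(\REG^Z_n)\setminus\cIC(\COMM)\neq\emptyset$, I would invoke Lemma~\ref{ic-comp-l2}, which produces a language $L\in\cIC(\REG^Z_2)$ that does not belong to $\cIC(\CIRC)$. Applying Lemma~\ref{l-context-gramm-monoton} to the inclusion $\REG^Z_2\subseteq\REG^Z_n$ (valid for every $n\geq 2$ by Theorem~\ref{th-subreg-hier}), we obtain $L\in\cIC(\REG^Z_n)$, while $L\notin\cIC(\CIRC)$ by the lemma, and therefore also $L\notin\cIC(\COMM)$ by the inclusion noted above.

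For the opposite direction, $\cIC(\COMM)\setminus\cIC(\REG^Z_n)\neq\emptyset$ and $\cIC(\CIRC)\setminus\cIC(\REG^Z_n)\neq\emptyset$ for every $n\geq 2$, I would invoke Lemma~\ref{ic-comp-l7}, which for each $n\geq 2$ provides a language $L_n\in\cIC(\COMM)\setminus\cIC(\REG^Z_n)$. By $\cIC(\COMM)\subseteq\cIC(\CIRC)$ the same $L_n$ lies in $\cIC(\CIRC)$, so it simultaneously witnesses both non-inclusions.

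Combining the two directions gives the desired incomparability of $\cIC(\REG^Z_n)$ with both $\cIC(\COMM)$ and $\cIC(\CIRC)$ for every $n\geq 2$. Since all the hard work (constructing the witnesses and proving they lie outside the relevant families) has already been carried out in Lemmas~\ref{ic-comp-l2} and \ref{ic-comp-l7}, the only remaining content of the proof is this bookkeeping with the inclusion $\COMM\subset\CIRC$; there is no serious technical obstacle left.
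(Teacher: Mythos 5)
Your proof is correct and follows essentially the same route as the paper: both directions are reduced, via the subregular inclusions of Theorem~\ref{th-subreg-hier} and the monotonicity of Lemma~\ref{l-context-gramm-monoton}, to the two witness facts $\cIC(\REG^Z_2)\setminus\cIC(\CIRC)\neq\emptyset$ (Lemma~\ref{ic-comp-l2}) and $\cIC(\COMM)\setminus\cIC(\REG^Z_n)\neq\emptyset$ (Lemma~\ref{ic-comp-l7}). You merely spell out the bookkeeping that the paper leaves implicit.
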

\begin{proof}
Due to the inclusion relations stated in Theorem~\ref{th-subreg-hier}, depicted in
Figure~\ref{fig-subreg-hier}, proofs of the following relations are sufficient:
\begin{enumerate}
\item $\cIC(\REG^Z_2)\setminus\cIC(\CIRC)\not=\emptyset$,
\item $\cIC(\COMM)\setminus\cIC(\REG^Z_n)\not=\emptyset$ for every $n\geq 2$.
\end{enumerate}
The first relation was proved with Lemma~\ref{ic-comp-l2}, the second one with Lemma~\ref{ic-comp-l7}.
\end{proof}


Regarding the families which are defined by the number of non-terminal symbols necessary for 
generating the selection languages, we obtain the following results.

\begin{lemma}\label{ic-comp-def-V1}
The relation
$\cIC(\DEF)\subset\cIC(\RL^V_1)$
holds.
\end{lemma}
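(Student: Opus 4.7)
The plan is to prove the inclusion $\cIC(\DEF)\subseteq\cIC(\RL_1^V)$ by a direct simulation that replaces each definite selection language by a collection of selection languages each generated by a one-non-terminal right-linear grammar, and then invoke an already established separation to witness properness.

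For the inclusion, I would start from an arbitrary contextual grammar $G=(V,\cS,A)$ where every selection language $S$ in $\cS$ is definite with respect to some subalphabet $U\subseteq V$, so $S=A_S\cup U^*B_S$ with $A_S$ and $B_S$ finite subsets of $U^*$. I then construct $G'=(V,\cS',A)$ by replacing every pair $(S,C)\in\cS$ with the pairs $(\sets{w},C)$ for each $w\in A_S$ together with the single pair $(U^*B_S,C)$. Every singleton $\sets{w}$ is generated by the right-linear grammar with the one rule $S\ra w$; and $U^*B_S$ is generated by the right-linear grammar with rules $S\ra\alpha S$ for every $\alpha\in U$ and $S\ra v$ for every $v\in B_S$. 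Both grammars use only one non-terminal, so all new selection languages lie in $\RL_1^V$.

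The equivalence $L(G)=L(G')$ is immediate from the definition of internal derivation: a step of $G$ that inserts a context $(u,v)\in C$ around a subword $x_2\in S$ can be simulated in $G'$ using $(\sets{x_2},C)$ when $x_2\in A_S$ or using $(U^*B_S,C)$ when $x_2\in U^*B_S$; and conversely every step of $G'$ is a step of $G$, since $\sets{w}\subseteq S$ and $U^*B_S\subseteq S$. The only mildly delicate point, and the place where one needs to check the definition conventions discussed in the preliminaries, is that each new selection language is defined over the same subalphabet $U$ (or a subset thereof for the singletons), which keeps us within the paper's subalphabet convention for selection languages.

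For properness, I would reuse the language $L$ from Lemma~\ref{ic-comp-l1}, which satisfies $L\in\cIC(\RL_1^V)\setminus\cIC(\PS)$. By Theorem~\ref{th-subreg-hier} (see Figure~\ref{fig-subreg-hier}) we have $\DEF\subset\PS$, and Lemma~\ref{l-context-gramm-monoton} then yields $\cIC(\DEF)\subseteq\cIC(\PS)$, so $L\notin\cIC(\DEF)$. Hence $L$ witnesses $\cIC(\DEF)\subsetneq\cIC(\RL_1^V)$. No real obstacle is expected: the whole argument is a bookkeeping simulation together with a citation of an already proven non-membership result.
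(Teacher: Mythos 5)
Your proposal is correct and follows essentially the same route as the paper: decompose each definite selection language $A_S\cup U^*B_S$ into a finite part and a $U^*B_S$ part, each generable with one non-terminal, and use the language of Lemma~\ref{ic-comp-l1} (which lies in $\cIC(\RL^V_1)\setminus\cIC(\PS)\subseteq\cIC(\RL^V_1)\setminus\cIC(\DEF)$) for properness. The only cosmetic difference is that you split the finite part into singletons whereas the paper keeps it as one finite selection language; both are in $\RL^V_1$, so nothing changes.
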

\begin{proof}
We first prove the inclusion $\cIC(\DEF)\subseteq\cIC(\RL^V_1)$.

Let $n\geq 1$ and 
\[G=(V,\set{(S_i,C_i)}{1\leq i\leq n},A)\]
be a contextual grammar where every selection language can be represented in the form
$S_i=A_i\cup V^*B_i$ with $1\leq i\leq n$ for finite subsets $A_i$ and $B_i$ of $V^*$.
The same language $L(G)$ is also generated by the contextual grammar
\[G'=(V,\set{(A_i,C_i)}{1\leq i\leq n}\cup\set{(V^*B_i,C_i)}{1\leq i\leq n},A).\]
Every such selection language 
$A_i$ and $V^*B_i$ for $1\leq i\leq n$ 
can be generated by a right-linear grammar with one non-terminal symbol only:
\[G_{A_i}=(\sets{S},V,\set{S\ra w}{w\in A_i},S)\]
for generating the language $A_i$ and
\[G_{B_i}=(\sets{S},V,\set{S\ra xS}{x\in V}\cup\set{S\ra w}{w\in B_i},S)\]
for generating the language $V^*B_i$.
Hence, $\cIC(\DEF)\subseteq\cIC(\RL^V_1)$.

With Lemma~\ref{ic-comp-l1}, it is proved that a language exists in the set $\cIC(\RL^V_1)\setminus\cIC(\PS)$.
This language is also a witness language for the properness of the inclusion $\cIC(\DEF)\subset\cIC(\RL^V_1)$.
\end{proof}

\begin{lemma}\label{ic-incomp-V1-ord-ps-comm-circ}
Every language family $\cIC(\RL^V_n)$ where $n\geq 1$ is incomparable to the 
families 
\[\cIC(\ORD),\ \cIC(\NC),\ \cIC(\PS),\ \cIC(\COMM), \text{ and }\cIC(\CIRC).\] 
\end{lemma}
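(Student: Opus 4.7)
The plan is to follow exactly the template of Lemmas~\ref{ic-incomp-P2-mon-ps-circ} and~\ref{ic-incomp-Z2-comm-circ}, using the hierarchy of Theorem~\ref{th-subreg-hier} (Figure~\ref{fig-subreg-hier}) together with the monotonicity of $\cIC$ (Lemma~\ref{l-context-gramm-monoton}) to collapse the ten required non-inclusions to just four non-emptiness statements, and then to read off witnesses from the auxiliary lemmas already collected in this section.

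More concretely, the chain $\ORD\subset\NC\subset\PS$ yields $\cIC(\ORD)\subseteq\cIC(\NC)\subseteq\cIC(\PS)$, and the inclusion $\COMM\subset\CIRC$ yields $\cIC(\COMM)\subseteq\cIC(\CIRC)$. Consequently a single language in $\cIC(\RL^V_n)\setminus\cIC(\PS)$ separates $\cIC(\RL^V_n)$ from each of $\cIC(\ORD),\cIC(\NC),\cIC(\PS)$, and one in $\cIC(\RL^V_n)\setminus\cIC(\CIRC)$ separates it from $\cIC(\COMM)$ as well as $\cIC(\CIRC)$; dually, a witness in $\cIC(\ORD)\setminus\cIC(\RL^V_n)$ lies simultaneously in $\cIC(\NC)\setminus\cIC(\RL^V_n)$ and $\cIC(\PS)\setminus\cIC(\RL^V_n)$, and a witness in $\cIC(\COMM)\setminus\cIC(\RL^V_n)$ lies also in $\cIC(\CIRC)\setminus\cIC(\RL^V_n)$. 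Thus I only need the four statements
\[
\cIC(\RL^V_n)\setminus\cIC(\PS)\neq\emptyset,\quad
\cIC(\RL^V_n)\setminus\cIC(\CIRC)\neq\emptyset,\quad
\cIC(\ORD)\setminus\cIC(\RL^V_n)\neq\emptyset,\quad
\cIC(\COMM)\setminus\cIC(\RL^V_n)\neq\emptyset,
\]
each for every $n\geq 1$.

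For the first two, I would invoke Lemmas~\ref{ic-comp-l1} and~\ref{ic-comp-l2}: both produce a language in $\cIC(\RL^V_1)$ which is outside $\cIC(\PS)$ resp.\ outside $\cIC(\CIRC)\cup\cIC(\SUF)$, and the trivial inclusion $\RL^V_1\subseteq\RL^V_n$ (lifted by Lemma~\ref{l-context-gramm-monoton}) raises both witnesses to $\cIC(\RL^V_n)$ for any $n\geq 1$. For the remaining two, the family of languages $L_n$ from Lemma~\ref{ic-comp-l4} does both jobs simultaneously: that lemma exhibits, for each $n\geq 1$, a language $L_n\in(\cIC(\COMM)\cap\cIC(\ORD))\setminus\cIC(\RL^V_n)$, which is precisely what is needed for (3) and (4).

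There is no genuine obstacle here, since all the hard combinatorial work, namely verifying that the witness languages really are not in $\cIC(\PS)$, $\cIC(\CIRC)$, or $\cIC(\RL^V_n)$, has already been carried out in the preceding lemmas. The only step requiring care is the bookkeeping: making sure that the lifted witnesses from $\cIC(\RL^V_1)$ are explicitly placed inside $\cIC(\RL^V_n)$ for all $n\geq 1$ (immediate from $\RL^V_1\subseteq\RL^V_n$), and that Lemma~\ref{ic-comp-l4} is quoted for the index $n$ matching the family $\cIC(\RL^V_n)$ appearing in the statement.
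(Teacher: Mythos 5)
Your proposal is correct and follows essentially the same route as the paper: both reduce the ten non-inclusions to the same four non-emptiness statements via the subregular hierarchy and the monotonicity of $\cIC$, and both cite Lemmas~\ref{ic-comp-l1}, \ref{ic-comp-l2}, and~\ref{ic-comp-l4} for the witnesses. The only (harmless) cosmetic difference is that you state the first two witnesses for all $n\geq 1$ and justify the lift from $\RL^V_1$ explicitly, whereas the paper states them for $n=1$ and leaves the lift implicit.
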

\begin{proof}
Due to the inclusion relations stated in Theorem~\ref{th-subreg-hier}, depicted in 
Figure~\ref{fig-subreg-hier}, proofs of the following relations are sufficient: 
\begin{enumerate}
\item $\cIC(\RL^V_1)\setminus\cIC(\PS)\not=\emptyset$,
\item $\cIC(\RL^V_1)\setminus\cIC(\CIRC)\not=\emptyset$,
\item $\cIC(\COMM)\setminus\cIC(\RL^V_n)\not=\emptyset$ for every $n\geq 1$,
\item $\cIC(\ORD)\setminus\cIC(\RL^V_n)\not=\emptyset$ for every $n\geq 1$.
\end{enumerate}
The first relation is proved in Lemma~\ref{ic-comp-l1},
the second in Lemma~\ref{ic-comp-l2}, and the other two in Lemma~\ref{ic-comp-l4}.
\end{proof}


The following theorem summarizes the results.

\begin{theorem}\label{th-ec-comp}
The relations depicted in Figure~\ref{ec-fig-comp} hold.
An arrow from an entry~$X$ to an entry $Y$ denotes the proper inclusion $X\subset Y$. 
If two families are not connected by a directed path then they are not necessarily incomparable.
\end{theorem}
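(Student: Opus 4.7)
The statement is a summary theorem, so the plan is to assemble the diagram by combining the results of the preceding lemmas with the two hierarchies already displayed in Figure~\ref{ic-fig-sep}, which may be taken as established. Since the theorem asserts only that the drawn arrows are proper inclusions, and explicitly says that non-connected pairs need not be incomparable, I do not need to prove any incomparability to complete the proof; the incomparability results proved above give additional information but are not required for this consolidated statement.

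First, I would observe that the subdiagram of Figure~\ref{ec-fig-comp} restricted to the structurally defined families $\cIC(\FIN), \cIC(\MON), \cIC(\NIL), \cIC(\COMB), \cIC(\DEF), \cIC(\SUF), \cIC(\ORD), \cIC(\NC), \cIC(\PS), \cIC(\COMM), \cIC(\CIRC), \cIC(\UF), \cIC(\REG)$ reproduces the left half of Figure~\ref{ic-fig-sep}, and the subdiagram restricted to the resource-restricted families $\cIC(\RL^V_n), \cIC(\RL^P_n), \cIC(\REG^Z_n)$ reproduces the right half. Both halves, together with the properness of their arrows, are already known from~\cite{Tru21-fi,DasManTru12b,Tru14-ncma}.

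Second, I would establish the edges that connect the two halves. These come directly from the preceding lemmas of this section: the coincidences $\cIC(\FIN)=\cIC(\RL^P_1)$ (Lemma~\ref{ic-l-P1-FIN}) and $\cIC(\MON)=\cIC(\REG^Z_1)$ (Lemma~\ref{ic-comp-mon-Z1}), and the proper inclusions $\cIC(\COMB)\subset\cIC(\REG^Z_2)$ (Lemma~\ref{ic-comp-comb-Z2}) and $\cIC(\DEF)\subset\cIC(\RL^V_1)$ (Lemma~\ref{ic-comp-def-V1}). From these four crossings, combined by transitivity with the edges of Figure~\ref{ic-fig-sep} and the underlying subregular inclusions of Theorem~\ref{th-subreg-hier} transferred to $\cIC$ via Lemma~\ref{l-context-gramm-monoton}, every inclusion arrow drawn in Figure~\ref{ec-fig-comp} is obtained. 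In particular, each cross-hierarchy arrow of Figure~\ref{ec-fig-comp} factors as a composition through one of the four cross edges just listed, and each such composition is covered by an arrow of Figure~\ref{ic-fig-sep} on either side.

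Third, I would argue that each inclusion drawn as an arrow is proper. Properness of the purely structural arrows and of the purely resource-restricted arrows is inherited from Figure~\ref{ic-fig-sep}. Properness of the new cross arrows $\cIC(\COMB)\subset\cIC(\REG^Z_2)$ and $\cIC(\DEF)\subset\cIC(\RL^V_1)$ is proved inside Lemmas~\ref{ic-comp-comb-Z2} and~\ref{ic-comp-def-V1} respectively, using the witness language of Lemma~\ref{ic-comp-l1}. Any arrow obtained only by transitive composition is automatically proper because at least one of its constituent arrows is already known to be proper. The main obstacle, though essentially a bookkeeping task, is to check that every arrow actually drawn in Figure~\ref{ec-fig-comp} can indeed be realised as such a composition and that no spurious edge has been introduced; the verification proceeds edge by edge against the list of ic-prefixed lemmas above.
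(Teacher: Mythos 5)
Your proposal matches the paper's (implicit) proof: the theorem is a pure summary statement, and its justification is exactly what you describe, namely the two previously established hierarchies of Figure~\ref{ic-fig-sep} combined with the cross-edges supplied by Lemmas~\ref{ic-l-P1-FIN}, \ref{ic-comp-mon-Z1}, \ref{ic-comp-comb-Z2} and~\ref{ic-comp-def-V1}, with transitivity and Lemma~\ref{l-context-gramm-monoton} filling in the rest. The one caveat is that the arrow from $\cIC(\ORD)$ to $\cIC(\NC)$ is drawn dashed because the properness of that inclusion is still open, so your blanket claim that every drawn arrow is a proper inclusion should exempt that single edge.
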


\begin{figure}[htb]
\centerline{%
\scalebox{0.75}{\begin{tikzpicture}[node distance=19mm and 25mm,on grid=true,
background rectangle/.style=
{
draw=black!80,
rounded corners=1ex},
show background rectangle]
\node (REG) {$\cIC(\REG)\stackrel{\text{\cite{DasManTru12b}}}{=}\cIC(\UF)$};
\node (PS) [below=of REG] {$\cIC(\PS)$};
\node (NC) [below=of PS] {$\cIC(\NC)$};
\node (ORD) [below=of NC] {$\cIC(\ORD)$};
\node (DEF) [below=of ORD] {$\cIC(\DEF)$};
\node (dummy2) [left=of DEF] {};
\node (V1) [above right=of DEF] {$\cIC(\RL_1^V)$};
\node (V2) [above=of V1] {$\vdots$};
\node (COMB) [below=of DEF] {$\cIC(\COMB)$};
\node (Z2) [right=of V1] {$\cIC(\REG_2^Z)$};
\node (Z3) [above=of Z2] {$\vdots$};
\node (NIL) [left=of COMB] {$\cIC(\NIL)$};
\node (dummy) [below=of COMB] {};
\node (MON) [below=of dummy] {\hspace*{2cm}$\cIC(\MON)\stackrel{\text{\footnotesize\ref{ic-comp-mon-Z1}}}{=}\cIC(\REG_1^Z)$};
\node (FIN) [left=of dummy] {\hspace*{-10mm}$\cIC(\RL_1^P)\stackrel{\text{\footnotesize\ref{ic-l-P1-FIN}}}{=}\cIC(\FIN)$};
\node (SUF) [left=of dummy2] {$\cIC(\SUF)$};
\node (CIRC) [right=of Z2] {$\cIC(\CIRC)$};
\node (COMM) [below=of CIRC] {$\cIC(\COMM)$};
\draw[hier] (FIN) to 
  (NIL);
\draw[hier] (MON) to 
  (COMB);
\draw[hier] (MON) [bend right=25]to 
  (NIL);
\draw[hier,rounded corners] (MON) |- +(-1,0.5) -| (SUF);
\draw[hier] (NIL) to 
  (DEF);
\draw[hier] (MON) [bend right=30]to 
  (COMM);
\draw[hier] (COMB) to 
  (DEF);
\draw[hier] (COMB) to [bend right=22] node[edgeLabel]{\ref{ic-comp-comb-Z2}} 
  (Z2);  
\draw[hiero] (ORD) to (NC);
\draw[hier] (DEF) to 
  (ORD);
\draw[hier] (DEF) to [bend right=22] node[edgeLabel]{\ref{ic-comp-def-V1}} 
  (V1);
\draw[hier] (NC) to 
  (PS);
\draw[hier] (PS) to 
  (REG);
\draw[hier] (COMM) to 
  (CIRC);
\draw[hier] (CIRC) [bend right=29]to 
  (REG);
\draw[hier] (SUF) [bend left=20]to 
  (PS);
\draw[hier] (V1) to (V2);
\draw[hier] (Z2) to (Z3);
\draw[hier] (V2) [bend right=22]to (REG);
\draw[hier] (Z3) [bend right=22]to (REG);
\end{tikzpicture}}}
\caption{Hierarchy of language families by contextual grammars; an edge label refers to the corresponding lemma
(where the relation was not already shown in Figure~\ref{ic-fig-sep}). The incomparabilities were proved in the 
Lemmas~\ref{ic-incomp-P2-mon-ps-circ},~\ref{ic-incomp-Z2-fin-ps},~\ref{ic-incomp-Z2-comm-circ}
and~\ref{ic-incomp-V1-ord-ps-comm-circ}.}%
\label{ec-fig-comp}
\end{figure}

If two families~$X$ and~$Y$ are not connected by a directed path, then~$X$ and~$Y$ are
in most cases incomparable. The only exceptions are the relations of the family $\cIC(\SUF)$
to the families $\cIC(\ORD)$ and~$\cIC(\NC)$, to the families $\cIC(\RL^V_n)$ for $n\geq 1$, to the 
families $\cIC(\REG^Z_n)$ for $n\geq 2$ where it is not known whether they are incomparable
or whether $\cIC(\SUF)$ is a subset of the other and the relation of the family $\cIC(REG_{n+1}^Z)$
to $\cIC(RL_n^V)$ for $n\geq 1$ where it is not known whether they are incomparable
or whether $\cIC(REG_{n+1}^Z)$ is a subset of $\cIC(RL_n^V)$. 

\pagebreak

\section{Conclusions and Further Work}

In \cite{Tru21-fi}, two independent hierarchies have been obtained for each type of contextual grammars, 
one based on selection languages defined by structural properties, the other one based on resources.
In the present paper, these hierarchies have been merged for internal contextual grammars.

Some questions remain open:
\begin{itemize}
\item Let $n\geq 1$. Is there a language $L_n\in\cIC(\SUF)\setminus\cIC(\RL^V_n)$?
\item Let $n\geq 2$. Is there a language $L_n\in\cIC(\SUF)\setminus\cIC(\REG^Z_n)$ for $n\geq 2$?
\end{itemize}

If the first question is answered affirmatively, then these languages $L_n$ satisfy also
$L_n\notin\cIC(\REG^Z_n)$ since $\cIC(\REG^Z_n)\subset\cIC(\RL^V_n)$ for $n\geq 1$ 
(Theorem~\ref{th-subreg-hier}, see Figure~\ref{fig-subreg-hier}).

If such languages are found, then it is clear that every language family
$\cIC(\RL^V_n)$ for $n\geq 1$ and every languages family 
$\cIC(\REG^Z_n)$ for $n\geq 2$ is incomparable to the family $\cIC(\SUF)$.
So far, we only know that $\cIC(\RL^V_n)\not\subseteq\cIC(\SUF)$ for $n\geq 1$
and that $\cIC(\REG^Z_n)\not\subseteq\cIC(\SUF)$ for $n\geq 2$ (both shown in Lemma~\ref{ic-comp-l1})

Recently, in \cite{Das-Analele15,DasTru22-ncma}, strictly locally $k$-testable languages have been
investigated as selection languages for contextual grammars. 
Also for the language families defined by those selection languages, it should be investigated where 
they are located in the presented hierarchy.

Additionally, other subfamilies of regular languages could be taken into consideration. Recently,
in~\cite{Das17,Das21}, external contextual grammars have been investigated where the selection 
languages are ideals or codes.
This reseach could be extended to internal contextual grammars with ideals or codes as selection languages.

\bibliographystyle{eptcs}
\bibliography{cont-subreg}

\end{document}